\title{\LARGE \bf
Covariance Intersection-based Invariant Kalman Filtering for Distributed Pose Estimation 
}
\author{Haoying Li$^{\rm a}$, Xinghan Li$^{\rm b}$, Shuaiting Huang$^{\rm b}$, Chao Yang$^{\rm c}$, and Junfeng Wu$^{\rm a}$
\thanks{  
$^{\rm a}$:~School of Data Science,  The Chinese University of Hong Kong, Shenzhen, Shenzhen, P. R. China.
 $^{\rm b}$:~College of Control Science and Engineering, Zhejiang University, Hangzhou, P. R. China.
 $^{\rm c}$: Department of Automation, East China University of Science and Technology, Shanghai, P. R. China.
 Emails: haoyingli@link.cuhk.edu.cn(H. Li), xinghanli@zju.edu.cn(X. Li),  
shuait\_huang@zju.edu.cn (S. Huang), 
 yangchao@ecust.edu.cn (C. Yang), junfengwu@cuhk.edu.cn (J. Wu).
} 			
}    
\begin{document}

\maketitle
\thispagestyle{empty}
\pagestyle{empty}

\begin{abstract}
This paper presents a novel approach to distributed pose estimation in the multi-agent system based on an invariant Kalman filter with covariance intersection. Our method models uncertainties using Lie algebra and applies object-level observations within Lie groups, which have practical application value. We integrate covariance intersection to handle correlated estimates and use the invariant Kalman filter to merge independent data sources. This strategy allows us to effectively tackle the complex correlations of cooperative localization among agents, ensuring our estimates are neither too conservative nor overly confident. Additionally, we examine the consistency and stability of our algorithm, providing evidence of its reliability and effectiveness in managing multi-agent systems.
\end{abstract}

\section{INTRODUCTION}
Pose estimation has been applied in various scenarios, including unmanned driving, virtual reality, and conveyor line monitoring~\cite{dorri2018multi}.  Among these, distributed pose estimation serves as the foundation for high-level tasks in multi-agent systems, making it a valuable area of research.

For the pose estimation problem, filtering methods, especially the Kalman filter (KF) and its variants, exhibit superior real-time performance compared to optimization-based methods. Although the extended Kalman filter (EKF) has been employed~\cite{trawny2005indirect}, the nonlinearity of the state-space introduces inaccuracies. The invariant extended Kalman filter (InEKF), which is grounded in invariant observer theory~\cite{barrau2016invariant}, has been utilized in pose estimation with promising results. It evaluates uncertainty in Lie algebra and then maps it back to the Lie group, ensuring observability consistency between the log-linear system and the original nonlinear system.

Multi-agent pose estimation introduces complex correlations via neighbor information sharing, with environmental measurements uncorrelated to motion noise. However, the KF only handles known correlations, leading to challenges in the design of the Distributed Kalman Filter~(DKF). Several attempts have been made. The consensus DKF~\cite{carli2008distributed} converges to a centralized KF but requires infinite communication steps. Without explicit correlation, Covariance Intersection~(CI)~\cite{chen2002estimation} can compute consistent estimates. However, CI assumes correlation in the worst case among estimates, potentially leading to over-conservative estimates if independence exists. To address this issue, Split Covariance Intersection~(SCI)~\cite{li2013cooperative} treats dependent and independent parts separately. However, SCI is applicable when the estimation error can be decomposed into two mutually independent components, such as simultaneously estimating self-state and target state~\cite{7954723}.

Combining CI and KF yields a consistent estimate, neither too conservative nor too confident. Chang \emph{et al.}~\cite{chang2021resilient} utilizes the DKF based on CI, termed CIKF for consistent multi-robot localization. Zhu \emph{et al.}~\cite{zhu2020fully} introduces a CI-based distributed information filter for localization and target tracking. However, incorporation of Lie group modeling is absent in the approaches discussed previously. Xu \emph{et al.}~\cite{10232373} introduces a CI-based distributed InEKF on Lie groups for cooperative target tracking, yet neglecting relative noises and assuming that agents are static. Lee \emph{et al.}~\cite{lee2023distributed} uses maximum likelihood estimation on Lie groups for mobile network self-tracking, but does not account for relative noises. An extension of SCI to Lie groups is proposed in~\cite{li2021joint}, though without an analysis of the algorithm's stability. Recently, Zheng \emph{et al.}~\cite{zhang2024towards} explores a novel direction by introducing a distributed invariant Kalman filter utilizing CI within mobile sensor networks for 2D target tracking under deception attacks but similarly neglects relative measurement noises.

We focus on a mobile multi-agent system where agents take relative pose and environmental measurements, with communication limited to neighboring agents. When they cooperatively estimate poses, the interplay between communication and relative observations introduces intricate correlations. We propose a Distributed Invariant CIKF (DInCIKF) for real-time distributed pose estimation, addressing unknown correlations with a focus on consistency and stability of estimation error covariance. 

The contributions of this paper are multifold. Primarily, we extend the invariant Kalman filter to a distributed framework based on covariance intersection, which mitigates the risk of producing over-conservative or over-confident estimates. By weaving in invariant error, we enhance the consistency and precision of pose estimation. Additionally, our adoption of a specialized pose measurement model ensures the system of logarithmic invariant error maintains linear characteristics, leading to desirable convergence properties. We further prove the consistency and stability of the DInCIKF, confirming that it's a solid choice for solving distributed pose estimation problems. 


The organization of this paper is as follows. Section II introduces the preliminaries. Section III defines the problem of our interest. Sections IV and V detail the DInCIKF algorithm and analyze its consistency and stability. Section VI presents simulations. Section VII concludes this work.

\section{Preliminaries}


\subsection{Kalman Filter and Covariance Intersection}~\label{CI}
Consistency is a vital characteristic for estimators. Conversely, an inconsistent estimate results in an overconfident estimate that underestimates actual errors. 
\begin{definition}[Estimation consistency~\cite{chang2021resilient}]
Consider a random vector $x \in \mathbb{R}^p$. Let $\hat x$ be an unbiased estimate of $x$ and $\hat P$ is the estimation error covariance. The estimate $\hat x$ is said to be consistent if 
$$
\mathbf{E}[(x-\hat x) (x-\hat x)^\top ] \leq \hat P,
$$
with the inequality defined in the sense of positive semi-definiteness.
\end{definition}

The KF assumes no correlation between the process noises and the measurement noises. The consistency of it is summarized in the following lemma:
\begin{lemma}[Consistency of KF]Consider uncorrelated, consistent and unbiased estimates of a random vector. The fused state estimate $\hat{x}_{\mathrm{KF}}$ with the estimation error covariance $\hat{P}_{\mathrm{KF}}$ given by KF is consistent, that is, $\mathbf{E}[(x-\hat{x}_{\mathrm{KF}})(x-\hat{x}_{\mathrm{KF}})^\top  ]\leq \hat{P}_{\mathrm{KF}}$. 
\label{l2}
\end{lemma}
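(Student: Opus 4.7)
The plan is to write down the KF information-form fusion rule, rephrase the fused error as a linear combination of the individual errors, and then push the assumed consistency of the inputs through that linear map. For clarity I would first prove the two-estimate case and then extend by induction to an arbitrary number of uncorrelated sources.

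First I would set up notation: let $\hat x_1,\hat x_2$ be unbiased, uncorrelated, consistent estimates of $x$, with declared covariances $\hat P_1,\hat P_2$ and true error covariances $P_i=\mathbf{E}[(x-\hat x_i)(x-\hat x_i)^\top]$ satisfying $P_i\preceq \hat P_i$ by the consistency hypothesis. The KF fusion gives
\begin{equation*}
\hat P_{\mathrm{KF}}^{-1}=\hat P_1^{-1}+\hat P_2^{-1},\qquad
\hat x_{\mathrm{KF}}=\hat P_{\mathrm{KF}}\bigl(\hat P_1^{-1}\hat x_1+\hat P_2^{-1}\hat x_2\bigr).
\end{equation*}
A short algebraic manipulation, using $\hat P_{\mathrm{KF}}^{-1}x=\hat P_1^{-1}x+\hat P_2^{-1}x$, lets me rewrite the fused error as
\begin{equation*}
x-\hat x_{\mathrm{KF}}=\hat P_{\mathrm{KF}}\bigl(\hat P_1^{-1}(x-\hat x_1)+\hat P_2^{-1}(x-\hat x_2)\bigr),
\end{equation*}
which is unbiased by linearity.

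Next I would compute the true fused covariance. Because the two input errors are uncorrelated, the cross terms vanish and
\begin{equation*}
\mathbf{E}\bigl[(x-\hat x_{\mathrm{KF}})(x-\hat x_{\mathrm{KF}})^\top\bigr]
=\hat P_{\mathrm{KF}}\bigl(\hat P_1^{-1}P_1\hat P_1^{-1}+\hat P_2^{-1}P_2\hat P_2^{-1}\bigr)\hat P_{\mathrm{KF}}.
\end{equation*}
Invoking the consistency hypothesis $P_i\preceq \hat P_i$ together with the standard monotonicity property $A\preceq B\Rightarrow M^\top A M\preceq M^\top B M$, applied with $M=\hat P_i^{-1}$, yields $\hat P_i^{-1}P_i\hat P_i^{-1}\preceq \hat P_i^{-1}$. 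Summing and sandwiching by $\hat P_{\mathrm{KF}}$ gives the bound
\begin{equation*}
\mathbf{E}\bigl[(x-\hat x_{\mathrm{KF}})(x-\hat x_{\mathrm{KF}})^\top\bigr]\preceq \hat P_{\mathrm{KF}}\bigl(\hat P_1^{-1}+\hat P_2^{-1}\bigr)\hat P_{\mathrm{KF}}=\hat P_{\mathrm{KF}},
\end{equation*}
which is exactly the claimed consistency. Extending to $N$ estimates is then a straightforward induction, since fusing the $(N{-}1)$-st consistent fusion with a new uncorrelated consistent estimate falls back to the two-input case just proved.

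I do not expect a serious obstacle; the only delicate point is making sure the monotonicity argument is applied to the correct Löwner order and that the uncorrelatedness assumption is used only once, to eliminate the cross term. If the authors allow correlated inputs (through shared history in the distributed setting), this lemma would not apply, and the later CI-based construction is precisely what is needed; here I would explicitly flag that the proof uses the zero cross-covariance in a single, essential place.
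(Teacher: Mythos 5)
The paper does not actually prove this lemma: it is stated as a known, standard fact (the companion CI lemma is the one attributed to a reference), and the text immediately moves on. So there is no in-paper argument to compare yours against. That said, your proof is correct and complete: writing the fused error as $x-\hat x_{\mathrm{KF}}=\hat P_{\mathrm{KF}}\bigl(\hat P_1^{-1}(x-\hat x_1)+\hat P_2^{-1}(x-\hat x_2)\bigr)$, killing the cross term by uncorrelatedness, and pushing $P_i\preceq\hat P_i$ through the congruence $M\mapsto \hat P_i^{-1}M\hat P_i^{-1}$ is exactly the standard information-form argument, and the induction to $N$ sources goes through because the $(N{-}1)$-fold fused error is a linear combination of errors each uncorrelated with the new one. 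Your closing caveat is also the right one for this paper: the lemma is invoked only for the KF update step in Theorem~\ref{thm:consistency}, where the authors explicitly note $\mathbf{E}[\breve{\xi}_i(k)n_{if}^\top]=0$, i.e., precisely the single place where zero cross-covariance is needed; the correlated sources are handled separately by CI.
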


CI is a conservative fusion approach that ensures consistent estimates without requiring knowledge of the exact correlations. Given estimates $\hat{x}_1,\ldots, \hat{x}_n$ of $x$ with estimation error covariance $\hat{P}_1,\ldots, \hat{P}_n$, CI fuses them as follows:
\begin{equation}~\label{CI}
\begin{aligned}
 P_{\mathrm{CI}}^{-1}=\sum_{i=1}^{n}\alpha_i \hat{P}_i^{-1}, \quad P_{\mathrm{CI}}^{-1} \hat{x}_{\mathrm{CI}} = \sum_{i=1}^n \alpha_i \hat{P}_i^{-1} \hat{x}_i,
\end{aligned}    
\end{equation}
where $0\leq \alpha_i \leq 1$ and $\sum_{i=1}^n\alpha_i=1$.
The coefficients $\alpha_i$ are determined by minimizing
$\operatorname{tr}(P_{\mathrm{CI}}),
$
which is a nonlinear problem and requires iterative search. In order to avoid the computational burden, a non-iterative coefficient-seeking algorithm was proposed in~\cite{niehsen2002information} as:
\begin{equation}~\label{eq:coeff}
    \alpha_i=\frac{1/\operatorname{tr}(\hat{P}_i)}{\sum_{i=1}^{n}1/\operatorname{tr}(\hat{P}_i)}.
\end{equation}

\begin{lemma}[Consistency of CI \cite{niehsen2002information}]The estimation error covariance $\hat{P}_{\mathrm{CI}}$ given by CI \eqref{CI} is consistent, that is,
$$
\mathbf{E}[(x-\hat{x}_{\mathrm{CI}})(x-\hat{x}_{\mathrm{CI}})^\top ] \leq \hat{P}_{\mathrm{CI}}.
$$
\label{consisCILemm}
\end{lemma}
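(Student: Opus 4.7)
The plan is to test the matrix inequality $\mathbf{E}[\tilde x_{\mathrm{CI}}\tilde x_{\mathrm{CI}}^\top]\leq \hat P_{\mathrm{CI}}$ against an arbitrary vector $v\in\mathbb{R}^p$, where $\tilde x_{\mathrm{CI}} := x - \hat x_{\mathrm{CI}}$ and $\tilde x_i := x - \hat x_i$, and reduce the problem to a scalar Cauchy--Schwarz estimate. This sidesteps the awkward fact that the cross-covariances $\mathbf{E}[\tilde x_i\tilde x_j^\top]$ are unknown and may be adversarial. It suffices to prove $v^\top \mathbf{E}[\tilde x_{\mathrm{CI}}\tilde x_{\mathrm{CI}}^\top] v \leq v^\top \hat P_{\mathrm{CI}} v$ for every $v$.

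First I would derive a clean expression for the fused error. Since $\sum_{i}\alpha_i = 1$, we have $\hat P_{\mathrm{CI}}^{-1}x = \sum_i \alpha_i \hat P_i^{-1} x$; subtracting the defining relation $\hat P_{\mathrm{CI}}^{-1}\hat x_{\mathrm{CI}} = \sum_i \alpha_i \hat P_i^{-1}\hat x_i$ yields
$$\tilde x_{\mathrm{CI}} = \hat P_{\mathrm{CI}}\sum_{i=1}^n \alpha_i \hat P_i^{-1}\,\tilde x_i.$$
Setting $b_i := v^\top \hat P_{\mathrm{CI}}\,\alpha_i\hat P_i^{-1}\tilde x_i$, so that $v^\top\tilde x_{\mathrm{CI}} = \sum_i b_i$, the pointwise Cauchy--Schwarz inequality in its weighted form gives
$$\Bigl(\sum_i b_i\Bigr)^2 = \Bigl(\sum_i \sqrt{\alpha_i}\cdot b_i/\sqrt{\alpha_i}\Bigr)^2 \leq \Bigl(\sum_i \alpha_i\Bigr)\Bigl(\sum_i b_i^2/\alpha_i\Bigr) = \sum_i b_i^2/\alpha_i.$$

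Taking expectations and invoking the individual consistency bounds $\mathbf{E}[\tilde x_i\tilde x_i^\top]\leq \hat P_i$ yields $\mathbf{E}[b_i^2]/\alpha_i \leq \alpha_i\, v^\top \hat P_{\mathrm{CI}}\hat P_i^{-1}\hat P_{\mathrm{CI}}\,v$ after the inner $\hat P_i$ cancels with one of the $\hat P_i^{-1}$ factors. Summing over $i$ and using $\sum_i \alpha_i \hat P_i^{-1} = \hat P_{\mathrm{CI}}^{-1}$, the bound collapses neatly to $v^\top \hat P_{\mathrm{CI}}v$, and arbitrariness of $v$ delivers the claimed positive semi-definite inequality.

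The main subtlety, and the reason a direct expansion fails, is that writing out $\mathbf{E}[\tilde x_{\mathrm{CI}}\tilde x_{\mathrm{CI}}^\top]$ as a double sum in $i,j$ produces uncontrolled off-diagonal terms $\mathbf{E}[\tilde x_i\tilde x_j^\top]$ for which no per-term bound is available. The Cauchy--Schwarz step is precisely the device that upper-bounds the squared sum \emph{before} taking expectation, decoupling the problem into diagonal pieces that the individual consistency bounds can control; the remainder is algebra that crucially exploits the specific algebraic form chosen for $\hat P_{\mathrm{CI}}^{-1}$, and in fact works for \emph{any} convex weights $\alpha_i$ rather than only the particular choice in \eqref{eq:coeff}.
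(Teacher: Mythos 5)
Your proof is correct. The paper does not actually prove this lemma---it is imported verbatim from the cited reference---so there is no in-paper argument to compare against; your derivation is the standard one for CI consistency: write the fused error as $\tilde x_{\mathrm{CI}} = \hat P_{\mathrm{CI}}\sum_i\alpha_i\hat P_i^{-1}\tilde x_i$, test the covariance inequality against an arbitrary $v$, and use weighted Cauchy--Schwarz to dispose of the unknown cross-terms \emph{before} taking expectations. The only cosmetic caveat is that indices with $\alpha_i=0$ should be dropped from the sum before dividing by $\alpha_i$; otherwise every step (the congruence $\mathbf{E}[\tilde x_i\tilde x_i^\top]\le\hat P_i \;\Rightarrow\; M\,\mathbf{E}[\tilde x_i\tilde x_i^\top]M^\top\le M\hat P_i M^\top$, the cancellation giving $\mathbf{E}[b_i^2]/\alpha_i\le\alpha_i\,v^\top\hat P_{\mathrm{CI}}\hat P_i^{-1}\hat P_{\mathrm{CI}}v$, and the final telescoping via $\sum_i\alpha_i\hat P_i^{-1}=\hat P_{\mathrm{CI}}^{-1}$) is sound, and you are right that the argument uses only $\alpha_i\ge 0$, $\sum_i\alpha_i=1$ and not the particular trace-based choice of weights in the paper.
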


\subsection{Matrix Lie Group and Lie Algebra}
Rigid body motion can be effectively modeled using matrix Lie groups. Three specific matrix Lie groups that will be used in this paper are introduced, collectively denoted by $G$. The special orthogonal group~$SO(3)$ is the set of rotation matrices in $\mathbb R^3$: $$
SO(3) \triangleq \{R\in \mathbb{R}^3| RR^\top  = I, \operatorname{det}(R)=1 \}.
$$
The special Euclidean group $SE(3)$, which comprises rotation and translation, is defined as
$$
 SE(3)\triangleq\left\{\begin{bmatrix}
R & p \\
0 & 1
\end{bmatrix}\bigg| R \in SO(3), p \in \mathbb{R}^3\right\}. $$
The matrix Lie group $SE_{2}(3)$ is as follows:
$$
SE_2(3)\triangleq\left\{ \left[\begin{array}{c|l}
R & p\quad v\\
\hline
0_{2 \times 3} &~~I_2 
\end{array}\right] \bigg| R \in SO(3), p, v \in \mathbb{R}^3\right\}.
$$

Each matrix Lie group is associated with a Lie algebra $\mathfrak{g}$, characterized by a vector space and a binary operation known as the Lie bracket. The Lie algebra of $SO(3)$ is given by:
$$
\mathfrak{so}(3) \triangleq \{ \omega ^\wedge \in \mathbb{R}^{3 \times 3} | \omega = [\omega_1~\omega_2~\omega_3]^\top  \in \mathbb{R}^3 \},
$$
where
$
\omega^{\wedge}=\begin{bmatrix}
0 & -\omega_3 & \omega_2 \\
\omega_3 & 0 & -\omega_1 \\
-\omega_2 & \omega_1 & 0
\end{bmatrix}
$. 

We abuse the notation $(\cdot)^\wedge$ in $SE(3)$ and $SE_2(3)$ to denote the mapping from the vector space to the corresponding Lie algebra. The inverse map of $(\cdot)^\wedge$ is denoted as map $(\cdot)^\vee$. The matrix form of Lie algebra associated with $SE(3)$ and $SE_2(3)$ are respectively given by:
$$
\begin{aligned}
&\mathfrak{s e}(3) \triangleq\left\{\left[\begin{array}{cc}
\omega^{\wedge} & t \\
{0}_{1 \times 3} & 0
\end{array} \right]\bigg| \omega^{\wedge} \in \mathfrak{s o}(3), t \in \mathbb{R}^3\right\},\\
&\mathfrak{se}_2(3) \triangleq\left\{\left[\begin{array}{c|c}{\omega}^{\wedge} & {t}_1 \quad {t}_2 \\ \hline {0}_{2 \times 3} & {0}_{2}\end{array}\right]\bigg| \omega^{\wedge} \in \mathfrak{s o}(3), t_1, {t}_2 \in \mathbb{R}^3\right\}.
\end{aligned}
$$

The exponential map is defined as $\operatorname{exp}(\xi)\triangleq \operatorname{exp}_m(\xi^\wedge): \mathbb{R}^n \rightarrow G$, where $\exp_m(\cdot)$ denotes the matrix exponential. The inverse mapping of exponential is the Lie logarithm, denoted by $\log(\cdot)$.

For $X \in G$ and $\xi \in \mathfrak{g}$, the adjoint map is as follows:
$$
\operatorname{Ad}_X : \mathfrak{g} \rightarrow \mathfrak{g};\quad  \xi ^ \wedge \mapsto \operatorname{Ad}_X (\xi^\wedge)\triangleq X \xi^\wedge  X^{-1}, 
$$
which can yield formulation:
\begin{align}~\label{eq:Ad_change_order}
  X \operatorname{exp}(\xi) =\operatorname{exp}(\operatorname{Ad}_X \xi ) X. 
\end{align}

Compounding two matrix exponentials via the Baker-Campbell-Hausdorff (BCH) formula is complex. For $x^{\wedge}, y^{\wedge} \in \mathfrak{g}$ and $\|y\|$ small, their compounded exponentials can be approximated to
\begin{align}
 \exp (x) \exp (y) \approx \exp (x + \operatorname{dexp}_{-x}^{-1} y),~\label{eq:exp_approx_smallbig} 
\end{align}
where $\operatorname{dexp}_x$ is the left Jacobian of $x$. In addition, if both $\|x\|$ and $\|y\|$ are small, it is simplified to
\begin{align}
\exp (x) \exp (y) \approx \exp (x+y).    ~\label{eq:exp_approx_small}
\end{align}

\section{Problem Formulation}
We consider a mobile multi-agent network comprising $n$ agents. Each agent employs a proprioceptive sensor Inertial Measurement Unit~(IMU), to generate self-motion information. Additionally, agents utilize exteroceptive sensors to acquire relative measurements against other agents and environmental features. The system is illustrated in Fig. \ref{problem}.


We adhere to the convention of using right subscripts for relative frames and left superscripts for base frames. The global frame is omitted if it is evident from the context.
\subsubsection{System Kinematics}
Let $p_i$, $R_i$ and $v_i$ denote the position, orientation and velocity of agent $i$  in the global frame. The state of each agent is represented as
$$
X_{i}\triangleq\left[\begin{array}{c|l}
R_i & p_i~~ v_i\\
\hline
0_{2 \times 3} &~~I_2 
\end{array}\right]\in SE_2(3).
$$
Agent $i$'s pose(orientation and position) is denoted by:
$$
T_i \triangleq \begin{bmatrix}
    R_i& p_i\\0_{1\times 3}&1
\end{bmatrix}.
$$

The 3D kinematics for agent $i$ is:
\begin{equation}~\label{eq:continuous_kinematics}
    \begin{aligned}
\dot{R}_i= R_i \omega_i^\wedge,\quad \dot{p}_i = v_i, \quad \dot{v}_i = a_i,
\end{aligned}
\end{equation}
where $w_i$, $a_i$ are angular velocity and acceleration, measured by the IMU. The measurements are modeled as:
$$
\omega_{i,m}=\omega_i+b_g+n_g,\quad a_{i,m} = R_i^\top  (a_i-g)+b_a+n_a,
$$
where the measurement noises $n_g$, $n_a$ are zero-mean white Gaussian noises, and $g$ denotes the gravity. The gyroscope and accelerometer bias $b_g$, $b_a$ are driven by white Gaussian noises $n_{b_g}$ and $n_{b_a}$, i.e., $ \dot{b}_g=n_{b_g}, \dot{b}_a = n_{b_a}
 $. The matrix differential equation for \eqref{eq:continuous_kinematics} is given by:
\begin{equation}~\label{eq:continuous_matrix_kinematics}
\dot{X}_i = X_i v_b +v_g X_i +MX_iN,
\end{equation}
where $M \triangleq \begin{bmatrix}
    I_3 & 0_{3 \times 2}\\0_{2 \times 3}& 0_2
\end{bmatrix}$, $N \triangleq \begin{bmatrix}
    0_{4\times 3}  & 0_{4 \times 2}\\0_{1\times 3}& 1 \quad 0
\end{bmatrix} $, and $v_b\triangleq \begin{bmatrix}
    (\omega_m-b_g-n_g)^\wedge & 0_{3\times 1} & a_m-b_a-n_a\\ 0_{2\times 3} & 0_{2\times 1} & 0_{2\times 1} 
\end{bmatrix} \in \mathfrak{se}_2(3)$, $v_g \triangleq \begin{bmatrix}
    0_3 & 0_{3 \times 1} & g\\
    0_{2\times 3} & 0_{2\times 1} &0_{2\times 1}
\end{bmatrix}\in \mathfrak{se}_2(3)$.

\subsubsection{Measurement Model}
The first type of exteroceptive sensors measurement model is the environmental measurements of features $f_s, s=1,2,\ldots$. The poses of environmental features are known and given:
$$
	T_{f_s}\triangleq\begin{bmatrix}
	R_{f_s}& p_{f_s}\\
	0&1\end{bmatrix} \in SE(3),
   \label{object_state}
$$
where $R_{f_s} \in SO(3)$ is the orientation of the feature in the global frame and $p_{f_s} \in \mathbb{R}^3$ is the feature's position. An environmental measurement is modeled as follows:
\begin{equation}~\label{eq:z_if}
^iT_{f_s,m}\triangleq h_{i f}(X_i)=T_i^{-1} \exp (n_{i f}) T_{f_s} ,
\end{equation}
where the measurement noise is modeled as white Gaussian noise $n_{if}\sim \mathcal{N}(0, R_{if}) $. 

The second type is the relative measurement between neighboring agents $i$ and $j$, which is modeled as:
\begin{equation}~\label{eq:z_ij}
^iT_{j,m}\triangleq h_{i j}(X_i, X_j)= T_i^{-1} \exp(n_{ij}) T_j , 
\end{equation}
where the relative measurement noise $n_{i j} \sim \mathcal{N}(0, R_{i j})$.

We remark that the poses of features can be obtained through semantic mapping techniques, and the relative pose measurements can be obtained from pose estimation methods, such as Perspective-n-Point~(PnP) in visual systems. The benefits stemming from the inclusion of these object-level measurements, denoting measurements related to features or neighboring agents as entire objects, are discussed in~\cite{9667208}. 
\begin{remark}
The uncertainty on the Lie algebra is typically modeled on the left (w.r.t. the global frame)~\cite{barfoot2014associating} as $^j T_{i,m}\triangleq \exp(n_{ij}) T_j^{-1} T_i$ or the right (w.r.t. the body-fixed frame)~\cite{lee2023distributed} as
$^j T_{i,m}\triangleq T_j(k)^{-1} T_i \exp(n_{ij})$ in the literature.
However, both of these models will introduce complex terms that depend on the system state into the estimator's error dynamics~\eqref{Pesti_cov}. Therefore, we opt to adopt the formulations in \eqref{eq:z_if} and \eqref{eq:z_ij}. 
\end{remark}

\subsubsection{Communication Network}
 The agent-to-agent communications are described by a directed graph $\mathcal{G}=(\mathcal{V}, \mathcal{E})$, with $\mathcal{V}$ representing the set of $n$ agents, and $\mathcal{E}$ the set of communication links. If $(j,i) \in \mathcal{E}$, agent $i$ can receive information from agent $j$. The set of neighbors of agent $i$ are denoted as $\mathcal{N}_i$.

Use a node $f$ to represent the set of environmental features and let $\mathcal{E}_{f}$ be a set of edges, where $(f,i) \in \mathcal{E}_{if}$ implies that node $i$ can access measurements concerning these features. Define an augmented graph $\bar{\mathcal{G}}=(\bar{\mathcal{V}}, \bar{\mathcal{E}})$, where $\bar{\mathcal{V}}=\mathcal{V} \cup \{f\}$, and $\bar{\mathcal{E}}=\mathcal{E}\cup {\mathcal{E}_{if}}$. We consider $\mathcal G$ and $\bar{\mathcal{G}}$ to be time-invariant and make the following assumptions.
\begin{assumption}
If agent $j$ is capable of measuring the relative pose of agent $i$, it can send information to agent $i$, that is, $(j,i)\in\mathcal E$.
\label{a1}
\end{assumption}
\begin{assumption}~\label{tree}
The augmented digraph $\bar{\mathcal{G}}$, contains a directed spanning tree rooted at node $f$.
\label{a2}
\end{assumption}

In multi-agent systems, environmental measurements and relative measurements are distributed across nodes. Exploring ways to harness the knowledge of individuals with abundant information to support those with limited information is crucial for system-wide precision and stability. In particular,
we are interested in \textit{designing a distributed pose estimator to compute $X_i$ for each agent $i$ based on the covariance intersection fusion scheme due to unknown noise inter-correlation among the information sources}.

\begin{figure}[H]
    \centering
    \includegraphics[width=0.85\linewidth]{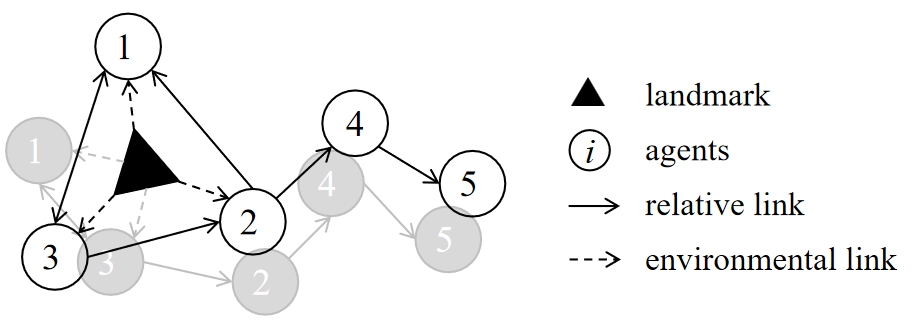}
     \caption{Multi-agent system with relative pose and environmental measurements.}
    \label{problem}
\end{figure}
\section{Distributed Invariant Kalman Filter based on Covariance Intersection}
In this section, we introduce the DInCIKF algorithm, which includes three steps: propagation step, the CI step and the KF update step. To proceed, we follow the convention of using the notation $\bar{(\cdot)}$ to denote the prior estimates after the prediction, $\breve{(\cdot)}$ to denote the estimates after the CI step, and $\hat{(\cdot)}$ to denote \textit{a posteriori} estimates the after KF update.

\subsection{Local Invariant Error Propagation}
Given the kinematics \eqref{eq:continuous_matrix_kinematics}, an estimate $\bar{X}_i$ to $X_i$ can be propagated as follows: 
\begin{equation}~\label{eq:X_prop_continuous}
    \dot{X}_i = X_i \hat{v}_b +v_g X_i +MX_iN,
\end{equation}
where $\hat{v}_b\triangleq \begin{bmatrix}
    (\omega_m-b_g)^\wedge & 0_{3\times 1} & a_m-b_a\\
    0_{2\times 3} & 0_{2\times 1}&0_{2\times 1}
\end{bmatrix}$\footnote{One can handle the estimation of the biases $b_a$ and $b_g$ of an IMU in conjunction with  the line of $SO(3)$ kinematics
under the iterated EKF framework, as outlined in~\cite{barrau2015non} and~\cite{hartley2020contact}. 
We
omit implementation details here. For those interested in the
finer points, we kindly refer them to the aforementioned two papers. }.

Define the right invariant error from $\bar{X}_i$ to $X_i$ as
$ \eta_i \triangleq X_i \bar{X_i}^{-1}$. The logarithmic right invariant error is defined as
$\xi_i =\operatorname{log}({\eta}_i)$. The dynamics of ${\xi}_i$ is as follows \cite{li2022closed}:
\begin{equation}\label{eq:xi_prop_continous}
    \dot{{\xi}}_i=F{\xi}_i+\operatorname{dexp}_{{\xi}_i}^{-1} \operatorname{Ad}_{\bar{X}_i} n_i,  
    \end{equation}
    where $n_i \triangleq [
    n_g^\top ~n_a^\top ~n_{b_g}^\top ~n_{b_a}^\top 
]^\top $ and $F \triangleq \begin{bmatrix}
    0_3 & 0_3 &0_3\\ 0_3& 0_3 & I_3 \\g^\wedge & 0_3 & 0_3    
    \end{bmatrix}$.

In order to implement the filtering algorithm using software to fuse the data from physical sensors, the discretization should be implemented \cite{hartley2020contact}. Through the discretization of the dynamics \eqref{eq:xi_prop_continous}, we obtain the dynamics of $\xi_i$:
\begin{equation}~\label{eq:xi_prop_discrete}
  \bar{\xi}_i(k+1)=A\hat{\xi}_i(k)+n_{i,d}, 
\end{equation}
where  $A \triangleq \begin{bmatrix}
    I & 0 & 0\\
    \frac{1}{2}g^\wedge \Delta t^2&I& I \Delta t\\ g^\wedge \Delta t &0&I
\end{bmatrix}$, $n_{i,d} \sim N(0_{9 \times 1}, Q_i)$ is the discrete approximation of the noise term in \eqref{eq:xi_prop_continous} obtained according to \cite{li2022closed} and  $\Delta t$ is the sampling interval. In this paper, we assume a constant $\Delta t$ so all agents have the same \(A\). Note that both $Q_i$ and $A$ are non-singular.

In virtue of \eqref{eq:xi_prop_discrete}, we use the following equation to approximate the evolution of the estimation error covariance:
\begin{equation}~\label{pred_P}
    \bar{P}_{i}(k+1)=A \hat{P}_i(k) A^\top +Q_i .
\end{equation}

\subsection{Invariant Covariance Intersection to Fuse Relative Measurements}
To facilitate agent $i$ in utilizing the information from its neighbors, it merges the relative measurements with the neighbors' self-estimates, subsequently fusing these with its prior estimates. This fusion introduces complicated correlations between the process noises and measurement noises, thus we use CI to give an consistent estimate. 

To be concise, we construct a virtual pose measurement of agent $i$ from \eqref{eq:z_ij}:
\begin{equation} 
\label{eq:z_ij_}
T_{i, m}^{(j)}(k) \triangleq \bar{T}_j(k)  ({}^j{T_{i, m}(k)}),
\end{equation}
Note that agent $j$ has local access to both $T_{i,m}^{(j)}(k)$ and $\bar{T}_j(k)$, thus $j$ is capable of calculating $T_{i,m}^{(j)}(k)$ locally and then broadcast it to agent $i$. $T_{i, m}^{(j)}(k)$ can be interpreted as a pose observation of agent $i$ from agent $j$, encompassing both the uncertainty of the relative measurement and the estimated uncertainty of agent $j$. Let $\delta$ denote the first six dimensions of $\xi$, and $\bar{\delta}_j \triangleq J  \bar{\xi}_j \in \mathbb{R}^6$, where
\begin{equation}\label{eqn:def_J}
J=[{I}_{6}~{0}_{6 \times 3}].
\end{equation}
Substitute \eqref{eq:z_ij} into \eqref{eq:z_ij_}:
$$
\begin{aligned}~\label{j_estoni}
T_{i, m}^{(j)}(k) & =\bar{T}_j(k) ( { }^j T_{i, m}(k)) \\
& =\exp (-\bar{\delta}_j(k))   T_j(k)    T_j(k)^{-1}  \exp (n_{i j})  \\
&=\exp (-\bar{\delta}_j(k))    \exp (n_{i j})    \  T_i(k)\\
&\overset{\eqref{eq:exp_approx_small}}{\approx} \exp (n_{i j}-\bar{\delta}_j(k))   T_i(k).
\end{aligned}
$$
Considering that $\mathbf{E}[\bar{\delta}_j(k) n_{ij}^\top ]=0$, the mean value of $(n_{ij}-\bar{\delta}_j(k))$ keeps zero since both $n_{ij}$ and $\bar{\delta}_j(k)$ have zero means. The uncertainty of $(n_{i j}-\delta_j(k))$ is:
\begin{align}
\tilde{R}_i^{(j)}(k)=  R_{i j} +J \bar{P}_{j}(k) J^\top .~\label{Pesti_cov}
\end{align}

Given that $\bar{P}_j(k)$ is predicted from $\hat{P}_j(k-1)$, and considering its dependency on neighbors, potentially including agent $i$, there exists a possibility that $\tilde{R}_i^{(j)}(k)$ and $\bar{P}_i(k)$ are correlated. Consequently, we apply CI \eqref{CI} to fuse $\bar{P}_i(k)$ with the $T_{i, m}^{(j)}(k)$ from agent $i$'s neighbors: 
\begin{subequations}~\label{CIupda}
\begin{align}
&\breve{P}_i^{-1}=\alpha_{i,k} \bar{P}_i^{-1}+\sum_{j \in \mathcal{N}_i} \alpha_{j,k} J^\top (\tilde{R}_i^{(j)})^{-1} J,\label{CIfuse1}\\
&\breve{\xi}_i=\breve{P}_i\sum_{j \in \mathcal{N}_i} J^\top  \alpha_{j,k}(\tilde{R}_i^{(j)})^{-1} \log (\bar{T}_i     ({{T}_{i,m}}^{(j)})^{-1}  )) \label{CIfuse2}, 
\end{align}
\end{subequations}
where $\alpha_{j,k}$ is determined by \eqref{eq:coeff} and we temporarily drop $(k)$ for brevity. $\bar{\xi}_i(k)$ it is omitted in \eqref{CIfuse2} for zero mean.

To implement KF update, we estimate the group state $\breve{X}_i(k)$ from $\breve{\xi}_i(k)$ and the prior estimate $\bar{X}_i(k)$:
\begin{align}~\label{mapping}
 \breve{X}_i(k)=\exp (\breve{\xi}_i(k)) \bar{X}_i(k).  
\end{align}

\subsection{Invariant Kalman Filtering Update to Fuse Environmental Measurements}
Given that the nature that $\mathbf{E}[\tilde{\xi}_i(k) n_{if}^\top ]=0$, to avoid an over-conservative estimate, we apply KF for the integration of environmental measurements.

We abuse the subscript $f$ to denote the features for simplicity. 
Similarly, we develop a virtual environmental measurement model that directly corresponds to the pose of agent $i$:
\begin{equation}~\label{virtual_zif}
 T_{i,m}^{(f)}(k) \triangleq T_f (^iT_{f,m}(k))^{-1}.   
\end{equation}
Substitute \eqref{eq:z_if} into \eqref{virtual_zif}:
$$
\begin{aligned}
T_{i,m}^{(f)}(k) &=T_f ( T_i(k)^{-1} \exp(n_{if})  T_f)^{-1}\\
&= T_f T_f^{-1} \exp(-n_{if}) T_i(k)  = \exp(-n_{if}) T_i(k).
\end{aligned}
$$
So the noise of $T_{i,m}^{(f)}(k)$ is $\delta_{i f} \sim \mathcal N(0, {R}_{i f})$. This result is intuitive for utilizing the known and given poses of features will not introduce additional uncertainty.

For simplicity, we omit the $(k)$ notation in the subsequent calculations. Define the residual as follows:
\begin{equation}~\label{res}
    r_i(\breve \xi_i)\triangleq \log (T_{i,m}^{(f)}   \breve{T}_i^{-1})=\log (T_{i,m}^{(f)} T_i^{-1}  \exp(J \breve{\xi}_i) ).
\end{equation}
The calculation of the Jacobian matrix is as follows:
\begin{align}
\frac{\partial r_i}{\partial \breve{\delta}_i}
&  =\lim _{\breve{\delta}_i \rightarrow 0} \frac{\log (T_{i ,m}^{(f)}   {T}_i^{-1}   \exp (\breve{\delta}_i))-\log (T_{i, m}^{(f)}   {T}_i^{-1})}{\breve{\delta}_i} \notag \\
& =\lim _{\breve{\delta}_i\rightarrow 0} \frac{\log (\exp (\phi_i)   \exp (\breve{\delta}_i))-\log (\exp (\phi_i))}{\breve{\delta}_i} \notag \\
& \overset{\eqref{eq:exp_approx_smallbig}}{\approx}\lim _{\breve{\delta}_i \rightarrow 0} \frac{\log (\exp (\phi_i+\operatorname{dexp}_{-\phi_i}^{-1}(\breve{\delta}_i)))-\phi_i}{\breve{\delta}_i} \notag  \\
&=\operatorname{dexp}_{-\phi_i}^{-1}
\triangleq J_{i f}, \quad \phi_{i} \triangleq \log (T_{i, m}^{(f)}   \breve{T}_i^{-1}) \notag.
\end{align}

Applying the chain rule, the Jacobian of the residual $r_i$ with respect to $\breve{\xi}_i$ is:
\begin{align}~\label{J}
  H_{i f}=\frac{\partial r_i}{\partial \breve{\xi}_i}=\frac{\partial r_i}{\partial \breve{\delta}_{i}}   \frac{\partial \breve{\delta}_{i}}{\partial \xi_i}=J_{i f}   J.
\end{align}
Subsequently, we use the standard KF update procedure:
\begin{equation}~\label{KFupdateP}
\begin{aligned}
& K_i=\breve{P}_i H_{i f}^{\top}(H_{i f} \breve{P}_i H_{i f}^{\top}+{R}_{i f})^{-1}, \\
& \hat{P_i}=(I-K_i H_{i f}) \breve{P}_i, \\
& \hat{\xi_i}=K_i r_i+\breve{\xi}_i.
\end{aligned}
\end{equation}
Finally, the process of mapping back to the Lie group to obtain the posterior estimate is as follows:
\begin{equation}~\label{KFupdateX}
    \hat{X}_i(k)=\exp (K_i(k)   r_i(k))   \breve{X}_i(k).
\end{equation}

To summarize, firstly, implement prediction according to IMU kinematics. Secondly, if the agent receives relative measurements, a CI update is performed. Finally, if environmental measurements are available, a KF update is executed. Each CI update requires neighbors broadcasting $T_{i,m}^{(j)}(k)$ to agent $i$. This procedure is concluded in Algorithm \ref{algorithm}.

\begin{algorithm} 
	\caption{Distributed Invariant CIKF~(DInCIKF)} 
	\label{algorithm} 
	\begin{algorithmic}
		\REQUIRE $\hat{X}_i(k-1), \hat{P}_i(k-1), ^iT_{f,m}(k), ^iT_{j,m}(k),$
        \STATE $~~~~\quad \quad \omega_m(k), a_m(k), b_a, b_g$.
		\ENSURE $\hat{X}_i(k), \hat{P}_i(k)$.
		\STATE \textbf{\underline{Step 1: Prediction.}} 
 \STATE Propagate $\bar{X}_i(k)$ and $\bar{P}_i(k)$ using \eqref{eq:X_prop_continuous} and \eqref{pred_P}.
        \STATE \textbf{\underline{Step 2: CI update.}}
		\STATE (1) Compute $T_{i,m}^{(j)}(k)$ and $\tilde{R}_i^{(j)}(k)$ using \eqref{eq:z_ij_} and \eqref{Pesti_cov};
        \STATE (2) Update $\breve{\xi}_i(k)$, $\breve{P}_i(k)$ using \eqref{CIupda}, and $\breve{X}_i(k)$ using \eqref{mapping}.
        \STATE \textbf{\underline{Step 3: KF update.}}
		\STATE (1) Compute $r_i(k)$ and $H_{if}$ using (\ref{res}) and (\ref{J});
        \STATE (2) Update $\hat{P}_i(k)$ and $\hat{X}_i(k)$ 
 using (\ref{KFupdateP}) and (\ref{KFupdateX}).
	\end{algorithmic} 
\end{algorithm}

\section{Theoretical Analysis}
\subsection{Consistency Analysis of DInCIKF}
We first make the following assumption on the initialization of DInCIKF:
\begin{assumption}\label{assumption:consistecy_initial_P}
The initial estimation error covariance $\hat{P}_i(0)$ is consistent.   
\label{a3}
\end{assumption}
We establish the following theorem.
\begin{theorem}[Consistency of DInCIKF]\label{thm:consistency} DInCIKF is consistent under Assumption~\ref{a3}, i.e., $\mathbf{E}[\hat{\xi}_i(k)\hat{\xi}_i^\top (k)]\leq \hat{P}_i(k)$ for any $i\in\mathcal{V}$ and $ k\ge0$.
\end{theorem}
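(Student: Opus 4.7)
The plan is to prove the theorem by induction on the time index $k$, using Assumption~\ref{a3} as the base case and showing that each of the three sub-steps in DInCIKF preserves consistency of the local estimates. Because the CI step at agent $i$ depends on the prior covariances $\bar{P}_j(k)$ of its neighbors, the inductive hypothesis must be stated simultaneously for all agents in $\mathcal{V}$: if $\mathbf{E}[\hat{\xi}_i(k)\hat{\xi}_i^\top(k)]\le\hat{P}_i(k)$ holds for every $i$, then the same inequality holds at time $k+1$. I would verify consistency in the order prediction $\to$ CI $\to$ KF update, matching the flow of Algorithm~\ref{algorithm}.

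For the prediction step, I would take the logarithmic invariant error dynamics \eqref{eq:xi_prop_discrete}, left-multiply it by its transpose and take expectation. Since $n_{i,d}$ is zero-mean white noise independent of $\hat{\xi}_i(k)$, the cross terms vanish and one obtains $\mathbf{E}[\bar{\xi}_i(k+1)\bar{\xi}_i^\top(k+1)] = A\,\mathbf{E}[\hat{\xi}_i(k)\hat{\xi}_i^\top(k)]\,A^\top + Q_i$, which by the inductive hypothesis is bounded above by $A\hat{P}_i(k)A^\top + Q_i = \bar{P}_i(k+1)$ from \eqref{pred_P}. For the CI step, I would first argue that each virtual measurement $T_{i,m}^{(j)}(k)$ carries an error with covariance upper bounded by $\tilde{R}_i^{(j)}(k) = R_{ij} + J\bar{P}_j(k)J^\top$: since $n_{ij}$ is independent of $\bar{\delta}_j(k)$, and since $J\bar{P}_j(k)J^\top$ upper bounds the true covariance of $\bar{\delta}_j(k)$ by the (already-established) consistency of $\bar{P}_j(k)$, the sum bounds the true error covariance of $n_{ij}-\bar{\delta}_j(k)$. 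With $\bar{P}_i(k)$ also consistent from the prediction step, Lemma~\ref{consisCILemm} applied to the fusion \eqref{CIupda} immediately yields consistency of $\breve{P}_i(k)$. Finally, for the KF update, the environmental noise $n_{if}$ is independent of $\breve{\xi}_i(k)$ by construction, so the standard KF recursion \eqref{KFupdateP} preserves consistency by Lemma~\ref{l2}.

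The main technical obstacle is the CI step, specifically the justification that $\tilde{R}_i^{(j)}(k)$ is a valid consistent upper bound on the virtual measurement error despite the tangled correlation structure. Two subtleties must be addressed: first, the linearization $\exp(-\bar{\delta}_j)\exp(n_{ij})\approx\exp(n_{ij}-\bar{\delta}_j)$ used in deriving \eqref{Pesti_cov} relies on the small-noise approximation \eqref{eq:exp_approx_small}, which I would treat as exact at the level of the error covariance (as is standard in InEKF analyses); second, $\bar{P}_j(k)$ may itself be correlated with $\bar{P}_i(k)$ through past communications, but this is precisely why CI — rather than KF — is invoked here, and Lemma~\ref{consisCILemm} is agnostic to such cross-correlations provided each input is individually consistent. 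I would explicitly flag that the zero-mean property of $\breve{\xi}_i$ at the CI step follows from the zero means of $n_{ij}$ and $\bar{\delta}_j(k)$, justifying the omission of $\bar{\xi}_i(k)$ in \eqref{CIfuse2}, so that the definition of consistency applies without modification throughout the induction.
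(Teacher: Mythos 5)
Your proposal is correct and follows essentially the same route as the paper's proof: induction on $k$, with the prediction step bounded via the noise-independence in \eqref{eq:xi_prop_discrete}, the CI step handled by Lemma~\ref{consisCILemm}, and the KF update by Lemma~\ref{l2} together with $\mathbf{E}[\breve{\xi}_i(k)n_{if}^\top]=0$. You are in fact somewhat more careful than the paper in two places it leaves implicit — stating the inductive hypothesis jointly over all agents (needed because $\tilde{R}_i^{(j)}(k)$ depends on the neighbors' $\bar{P}_j(k)$) and explicitly verifying that $\tilde{R}_i^{(j)}(k)$ is a consistent bound on the virtual measurement error.
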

\begin{proof}
We prove the result by induction. 

At $k =0$, the result holds due to Assumption \ref{a3}. Now assume $\mathbf{E}[\hat{\xi}_i(k-1) \hat{\xi}_i(k-1)^\top ] \leq \hat{P}_i(k-1)$ at $k-1$. Then, since $\mathbf{E}[\bar{\xi}_i(k)  (n_{d})^\top ] =0$, $$
\begin{aligned}
    \mathbf{E}[\bar{\xi}_i(k) \bar{\xi}_i(k)^\top  ] &= A (\mathbf{E}[\hat{\xi}_i(k-1) \hat{\xi}_i(k-1)^\top  ) ]A^\top  + Q_i\\ &\leq A \hat{P}_i(k-1)A^\top  + Q_i = \bar{P}_i(k).
\end{aligned}
$$
By Lemma \ref{consisCILemm}, after the CI step, it follows that
$$\mathbf{E}[\breve{\xi}_i(k)(\breve{\xi}_i(k))^\top ] \leq \breve{P}_i(k).$$
By Lemma \ref{l2} and $\mathbf{E}[\breve{\xi}_i(k)(n_{if})^\top ]=0$, the KF update yields
$\mathbf{E}[(\hat{\xi}_i(k))(\hat{\xi}_i(k))^\top ]\leq \hat{P}_i(k)$, which completes the proof.
\end{proof}

\subsection{Stability Analysis}
Under a mild connectivity assumption on the augmented graph $\bar{\mathcal G}$, we can show that the DInCIKF is stable in terms of expected squared estimation error over time.
\begin{theorem}[Stability of DInCIKF]\label{thm:stability_DInCIKF}
Under Assumptions~\ref{a2} and~\ref{assumption:consistecy_initial_P}, the proposed DInCIKF is stable in the sense that $\mathbf E[\hat{\xi}(k)\hat{\xi}(k)^\top ]$ is bounded across time $k$ for all $i\in\mathcal V$.
\end{theorem}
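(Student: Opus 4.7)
The plan is to leverage Theorem~\ref{thm:consistency} to reduce the statement to a deterministic one about the covariances: since consistency gives $\mathbf{E}[\hat{\xi}_i(k)\hat{\xi}_i(k)^\top]\leq \hat{P}_i(k)$, it suffices to show that $\hat{P}_i(k)$ stays uniformly bounded in $k$ for every $i\in\mathcal{V}$. I would then exploit the spanning-tree structure from Assumption~\ref{a2} to propagate boundedness outward from the feature node $f$, by induction on the depth of agent $i$ in that tree.

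The first substantive step is observability. A direct rank computation on the matrix $[J^\top~A^\top J^\top]$ shows that $(A,J)$ is observable, because the off-diagonal $I_3\Delta t$ block of $A$ maps velocity error into predicted position error after a single step of propagation. Since $H_{if}=J_{if}J$ with $J_{if}$ invertible, this immediately yields observability of $(A,H_{if})$, so for any depth-$1$ agent $i$ directly observing features the local iteration is a classical invariant Kalman filter whose Riccati recursion converges, giving the base case $\sup_k\|\hat{P}_i(k)\|<\infty$. For the inductive step, suppose every agent at depth less than $d$ has uniformly bounded covariance. For an agent $i$ at depth $d$ with parent $j$, the hypothesis gives $\sup_k\|\bar{P}_j(k)\|<\infty$, hence $\sup_k\|\tilde{R}_i^{(j)}(k)\|<\infty$ via \eqref{Pesti_cov}. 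The CI contribution $\alpha_{j,k}\, J^\top (\tilde{R}_i^{(j)})^{-1} J$ in \eqref{CIfuse1} then injects uniformly non-degenerate information on the pose subspace $J\xi_i$; comparing $\hat{P}_i(k)$ against a fictitious Kalman filter that uses only the parent's virtual pose measurement $T_{i,m}^{(j)}$ together with the dynamics $A$, the observability of $(A,J)$ transfers this pose-only information into the velocity component and yields a bounded steady-state covariance that dominates $\hat{P}_i(k)$.

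The principal obstacle I anticipate is controlling the CI weights $\alpha_{i,k},\alpha_{j,k}$ away from zero uniformly in $k$: the trace-based rule \eqref{eq:coeff} ties each weight to the very covariances one is trying to bound, so there is a circular dependence. Resolving it will require a bootstrapping argument nested inside the induction, showing that under the inductive hypothesis every trace appearing in \eqref{eq:coeff} at agent $i$ remains inside a compact positive interval, which in turn keeps each weight uniformly bounded away from $0$ and $1$. A secondary technical point is that CI contributes information only on the six-dimensional pose block $J\xi_i$, so the recovery of full nine-dimensional observability must be argued across the prediction step rather than within a single CI update.
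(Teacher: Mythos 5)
Your plan follows essentially the same route as the paper's proof: consistency (Theorem~\ref{thm:consistency}) reduces the claim to uniform boundedness of $\hat{P}_i(k)$, which is then shown by induction on depth in the spanning tree of Assumption~\ref{a2}, comparing each agent's recursion against an auxiliary Riccati system fed by the inductively bounded parent information and using observability of $(A,\cdot)$ for pose-only measurements recovered across a prediction step --- this is exactly the content of the paper's AUBS construction together with Lemma~\ref{lamammamamamam}, Lemma~\ref{CILINK}, and Lemma~\ref{construct_AUBS}. The one point where you are arguably more careful than the paper is the CI weights: the paper simply sets $\alpha_i=\min_k(\alpha_{i,k})$ without establishing that this is strictly positive, and the bootstrap you propose (bounding the traces entering \eqref{eq:coeff} under the inductive hypothesis so that the weights stay in a compact subinterval of $(0,1)$) is precisely what is needed to close that step.
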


The proof of Thoerem~\ref{thm:stability_DInCIKF} is constructive by showing the existence of an 
upper bound for the error covariance of DInCIKF. Its outline is as follows. First, we construct an Auxiliary Upper Bound System (AUBS), which is claimed in Lemma \ref{construct_AUBS}. Then we prove the convergence of the AUBS state by Lemma \ref{lamammamamamam} and Lemma \ref{CILINK}. Finally, we conclude the proof by analyzing the observability of the error dynamics of the proposed filter.
Some technical lemmas that support the proof are provided in the Appendix. 

\textit{Proof of Theorem~\ref{thm:stability_DInCIKF}.} After the CI and KF update, the \textit{a posteriori} estimation error covariance is given by
\begin{align}~\label{combine}
&\hat{P}_i(k)^{-1}\notag\\
= &\alpha_{i,k} \bar{P}_i(k)^{-1}+H_{if}^\top  {R}_{if}^{-1} H_{if}\notag\\
&+\sum_{j \in \mathcal{N}_i(k)} \alpha_{j,k} J^\top  ( {R}_{ij}+  J \bar{P}_{j}(k) J^\top        )^{-1} J \notag\\
 =& \alpha_{i,k} \bar{P}_i(k)^{-1}+H_{if}^\top  {R}_{if}^{-1} H_{if}\\&+\sum_{j \in \mathcal{N}_i(k)} \alpha_{j,k} J^\top  ( {R}_{ij}+  J (A\hat{P}_j(k-1)A^\top  +Q_d) J^\top        )^{-1} J.\notag
\end{align}

Now we construct a directed spanning tree from the augmented graph $\bar{\mathcal G}$ and thereby construct an AUBS for each agent $i$. 
With Assumption~\ref{tree}, we can construct a directed spanning tree from with the root being node $f$, denoted as $\bar{\mathcal{T}}=( \bar{\mathcal{V}}, \mathcal{E}_T)$. The set $\bar{\mathcal{V}}$ 
 is partitioned as $\bar{\mathcal{V}} = \mathcal{V}_1\cup
\mathcal{V}_2\cup\cdots$, 
where $\mathcal{V}_1$ is the set of all child nodes of  root $f$ and $\mathcal{V}_2$ is the set of all child nodes of the nodes in $\mathcal V_1$, and so forth. The AUBS is constructed as follows:
\begin{equation}~\label{AUBS}
\begin{aligned}
& \bar{\Pi}_i(k)=A\hat{\Pi}_i(k-1)A^\top +Q_d, \\
& \hat{\Pi}_i(k)^{-1}=\alpha_{i}\bar{\Pi}_i(k)^{-1}+\check{H}_i^{T}\check{R}_i^{-1} \check{H}_i,
\end{aligned}
\end{equation}
initialized with 
$\bar{\Pi}_{i}(0) \geq \bar{P}_{i}(0)$, where $n_i$ denotes the number of elements in $\mathcal{N}_i \cup \{i\}$. 
In~\eqref{AUBS}, if $i \in \mathcal{V}_1$, $\check{H}_i \triangleq H_{if}$, and $\check{R}_i \triangleq R_{if}$. Provided that a node $j$'s $\check{H}_j$ and 
$\check{R}_j$ have been defined, then, for any its child node $i$ in $\bar{ \mathcal T}$, $\check{H}_i \triangleq \check{H}_j(A^{-1} J^\top J)$ and $\check{R}_i \triangleq c_j \check{R}_j$, where $c_j$ is calculated following~\eqref{eqn:def_c} in Appendix and $\alpha_i = \min(\alpha_{i,k})$. We can recursively define $\check{H}_i$ and 
$\check{R}_i$ for all the nodes across the network.

We next verify the convergence of AUBS. We first introduce the definition of the observability matrix.
\begin{definition} The observability matrix of $(A,\check H_i)$ is
$$O(A,\check{H}_i) = [(\check{H}_i)^\top ~(\check{H}_iA)^\top ~\ldots~(\check{H}_iA^{8})^\top ]^\top ,
$$
where $A \in \mathbb{R}^{9\times 9}$. The system $(A,\check{H}_i)$ is said to be observable when $O(A,\check{H}_i)$ has full column rank.
\end{definition}

To proceed, we need the following lemma that is grounded in optimal control theory~\cite{kailath2000linear}. The proof is omitted due to space constraints.
\begin{lemma}
Given that $({A}, Q_i^{1/2})$ is controllable and $({A}, \check{H}_i)$ is observable, with $\Pi_i(0) \geq 0$, the recursion of   \eqref{AUBS}:
$$
\bar{\Pi}_i(k+1)=\check{A}(\bar{\Pi}_i(k)^{-1} +\check{H}_i^\top  (\alpha_i \check{R}_i)^{-1} \check{H}_i    )^{-1} \check{A}^\top  +Q_i,
$$
converges to the $\Pi_i$, which is a unique PD solution to the discrete algebraic Riccati equation
$$
\Pi_i = \check{A}_i \Pi_i \check{A}_i^\top  +Q_i - \Pi \check{H}_i^\top  (\alpha_i \check{R}_i + \check{H}_i^\top  \Pi \check{H}_i)^{-1}\check{H_i}\Pi,
$$
where $\check{A}_i \triangleq \frac{1}{\sqrt{\alpha_i}}A$.
\label{lamammamamamam}
\end{lemma}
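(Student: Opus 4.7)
The plan is to recognize the recursion in Lemma~\ref{lamammamamamam} as a standard discrete-time Kalman-filter Riccati equation for a rescaled linear-Gaussian system and then apply a classical Riccati convergence theorem. Concretely, I would first do the algebraic bookkeeping that turns the two-step update in~\eqref{AUBS} into the single-step form stated in the lemma: substitute
\[
\hat{\Pi}_i(k-1)=\bigl(\alpha_i\bar{\Pi}_i(k-1)^{-1}+\check{H}_i^{\top}\check{R}_i^{-1}\check{H}_i\bigr)^{-1}
\]
into $\bar{\Pi}_i(k)=A\hat{\Pi}_i(k-1)A^{\top}+Q_i$, factor $\alpha_i$ out of the information form, and absorb the resulting $1/\sqrt{\alpha_i}$ into $A$ to get $\check{A}_i=A/\sqrt{\alpha_i}$. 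This is exactly the Riccati recursion governing the prediction covariance of a Kalman filter for the surrogate LTI system $x(k+1)=\check{A}_ix(k)+w(k)$, $y(k)=\check{H}_ix(k)+v(k)$ with $\operatorname{Cov}(w)=Q_i$ and $\operatorname{Cov}(v)=\alpha_i\check{R}_i$.

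Next I would transfer the hypotheses of the lemma from $(A,\cdot)$ to $(\check{A}_i,\cdot)$. Since $\alpha_i>0$ and $\check{A}_i=A/\sqrt{\alpha_i}$, the controllability matrix $[\,Q_i^{1/2},\,\check{A}_i Q_i^{1/2},\,\ldots,\check{A}_i^{n-1}Q_i^{1/2}\,]$ is obtained from the one built with $A$ by scaling its $k$-th block column by $\alpha_i^{-k/2}$, leaving the rank invariant; the identical argument applied to the observability matrix shows that observability of $(A,\check{H}_i)$ carries over to $(\check{A}_i,\check{H}_i)$. Hence $(\check{A}_i,Q_i^{1/2})$ is controllable and $(\check{A}_i,\check{H}_i)$ is observable.

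With these ingredients in place I would invoke a standard Riccati convergence theorem (for instance Theorem~14.5.3 of Kailath, Sayed and Hassibi, \emph{Linear Estimation}, or the Anderson--Moore DARE theorem): under controllability of the noise pair and observability of the measurement pair, the Kalman Riccati iteration $\bar{\Pi}_i(k+1)=\check{A}_i(\bar{\Pi}_i(k)^{-1}+\check{H}_i^{\top}(\alpha_i\check{R}_i)^{-1}\check{H}_i)^{-1}\check{A}_i^{\top}+Q_i$ converges from any $\bar{\Pi}_i(0)\geq 0$ to the unique positive definite fixed point of the DARE, which, after one more application of the Woodbury identity, is precisely $\Pi_i=\check{A}_i\Pi_i\check{A}_i^{\top}+Q_i-\Pi_i\check{H}_i^{\top}(\alpha_i\check{R}_i+\check{H}_i^{\top}\Pi_i\check{H}_i)^{-1}\check{H}_i\Pi_i$ displayed in the lemma.

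The one non-routine point I expect is justifying that replacing the time-varying CI weights $\alpha_{i,k}$ by the constant $\alpha_i=\min_k\alpha_{i,k}$ is legitimate for the purpose of producing an \emph{upper} bound on $\bar P_i(k)$; this is really a monotonicity argument for the Riccati operator in its ``measurement'' term, since shrinking $\alpha_{i,k}$ to $\alpha_i$ only makes the information update smaller and therefore the predicted covariance larger. Beyond that, everything reduces to invoking the classical DARE convergence theorem, so the proof is a short one-page argument whose main content is the change-of-variables to $\check{A}_i$ and the rank-preservation check.
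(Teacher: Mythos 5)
The paper omits the proof of this lemma entirely, citing only that it is grounded in optimal control theory; your proposal supplies exactly the standard argument being invoked --- recasting the two-step AUBS update as the Kalman Riccati recursion for the surrogate pair $(\check{A}_i,\check{H}_i)$ with noise covariances $Q_i$ and $\alpha_i\check{R}_i$, noting that controllability and observability are invariant under the scalar rescaling $A\mapsto A/\sqrt{\alpha_i}$, and invoking the classical DARE convergence theorem --- so it is correct and matches the intended route. The only stray element is your closing remark about justifying the replacement $\alpha_i=\min_k\alpha_{i,k}$: that monotonicity step belongs to the separate comparison Lemma~\ref{construct_AUBS}, not to this lemma, which concerns only the constant-coefficient AUBS recursion.
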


Note that the nonsingularity of $Q_i$ easily verifies the controllability. To show the convergence, it suffices to show 
the observability of $(A,\check{H}_i)$, which is indicated in the Lemma~\ref{CILINK} in Appendix. 
By Lemma~\ref{CILINK}, since all the nodes in $\bar{\mathcal{T}}$ have access to node $f$, the AUBS is observable for all $i\in \mathcal V$. The convergence of 
$\bar{\Pi}_i(k)$ follows by 
Lemma \ref{lamammamamamam}.

The following lemma connects DInCIKF and the AUBS, the proof of which is provided in the Appendix.  
\begin{lemma}~\label{construct_AUBS}
$
\hat{\Pi}_i(k) \geq \hat{P}_i(k)$ and $    \bar{\Pi}_i(k) \geq \bar{P}_i(k), ~ \forall k \geq 0.
$
\end{lemma}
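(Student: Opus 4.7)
The natural approach is induction on $k$, with Assumption~\ref{a3} (extended by initialization $\bar{\Pi}_i(0)\geq\bar{P}_i(0)$) providing the base case. At the inductive step, I would handle the prediction and update separately, showing first that $\bar{\Pi}_i(k)\geq\bar{P}_i(k)$ and then that $\hat{\Pi}_i(k)\geq\hat{P}_i(k)$.

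The prediction step is straightforward: from $\bar{\Pi}_i(k)=A\hat{\Pi}_i(k-1)A^\top+Q_d$ and the inductive hypothesis $\hat{\Pi}_i(k-1)\geq\hat{P}_i(k-1)$, the map $P\mapsto APA^\top+Q_d$ is monotone in the PSD order, so $\bar{\Pi}_i(k)\geq\bar{P}_i(k)$. For the update step, it suffices to verify the reverse inequality of inverses, i.e.\ $\hat{\Pi}_i(k)^{-1}\leq\hat{P}_i(k)^{-1}$. Comparing the AUBS recursion \eqref{AUBS} with \eqref{combine}, I would bound the prior-precision term $\alpha_i\bar{\Pi}_i(k)^{-1}\leq\alpha_{i,k}\bar{P}_i(k)^{-1}$ using $\alpha_i=\min_k\alpha_{i,k}\leq\alpha_{i,k}$ together with $\bar{\Pi}_i(k)\geq\bar{P}_i(k)$. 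What remains is to absorb the measurement term $\check{H}_i^\top\check{R}_i^{-1}\check{H}_i$ into the actual observation terms appearing in $\hat{P}_i(k)^{-1}$.

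This is done by induction along the spanning tree $\bar{\mathcal T}$. If $i\in\mathcal V_1$, then $\check{H}_i=H_{if}$ and $\check{R}_i=R_{if}$ by construction, so $\check{H}_i^\top\check{R}_i^{-1}\check{H}_i$ is matched exactly by the corresponding environmental term in $\hat{P}_i(k)^{-1}$, and the neighbor sum can be discarded as PSD. If $i\notin\mathcal V_1$, let $j$ be its parent in $\bar{\mathcal T}$; by Assumption~\ref{a1} $j\in\mathcal N_i$, so the sum in \eqref{combine} contains the term $\alpha_{j,k}J^\top(R_{ij}+J\bar{P}_j(k)J^\top)^{-1}J$. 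Applying the inductive hypothesis $\bar{P}_j(k)\leq\bar{\Pi}_j(k)$ (monotone in the Loewner order through the inverse), together with $\alpha_{j,k}\geq\alpha_j$, this term is lower-bounded by $\alpha_j J^\top(R_{ij}+J\bar{\Pi}_j(k)J^\top)^{-1}J$. The remaining algebraic claim is that the recursive definition $\check{H}_i=\check{H}_j(A^{-1}J^\top J)$ and $\check{R}_i=c_j\check{R}_j$, with $c_j$ chosen according to \eqref{eqn:def_c}, ensures
\begin{equation*}
\check{H}_i^\top\check{R}_i^{-1}\check{H}_i\;\leq\;\alpha_j J^\top(R_{ij}+J\bar{\Pi}_j(k)J^\top)^{-1}J.
\end{equation*}

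The main obstacle is precisely this last inequality: it is where the geometry of the inverse-propagation factor $A^{-1}J^\top J$ meets the CI-weighted measurement term, and where the scalar $c_j$ must be calibrated to absorb the extra $R_{ij}$ and the $J\bar{\Pi}_j J^\top$ factor. I expect to use a Schur-complement/Woodbury manipulation to rewrite $J^\top(R_{ij}+J\bar{\Pi}_j(k)J^\top)^{-1}J$ in a form compatible with $\check{H}_j^\top\check{R}_j^{-1}\check{H}_j$ (itself a piece of $\hat{\Pi}_j(k)^{-1}$ via \eqref{AUBS}), and then exploit the recursion on $j$ to close the loop. Once this inequality is in hand, combining it with the bound on the prior term yields $\hat{\Pi}_i(k)^{-1}\leq\hat{P}_i(k)^{-1}$, completing the inductive step.
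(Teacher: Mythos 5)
Your skeleton matches the paper's argument: a double induction (over time $k$ and over the layers $\mathcal V_1,\mathcal V_2,\dots$ of the spanning tree), monotonicity of $P\mapsto APA^\top+Q$ for the prediction step, comparison of precision matrices for the update step, exact matching of the environmental term for $i\in\mathcal V_1$, and reduction of the general case to a single inequality that absorbs $\check H_i^\top\check R_i^{-1}\check H_i$ into the parent's term in the neighbor sum. That reduction is correct as far as it goes.

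The gap is that you stop exactly where the lemma's real content begins. The inequality
$\check H_i^\top\check R_i^{-1}\check H_i\leq \alpha_j J^\top(R_{ij}+J\bar\Pi_j(k)J^\top)^{-1}J$
is not something a generic Woodbury manipulation delivers; it is precisely the statement that justifies the recursive definitions $\check H_i=\check H_j A^{-1}J^\top J$ and $\check R_i=c_j\check R_j$, and the constant $c_j$ has to be \emph{constructed}, not merely "calibrated." The paper's chain is: (i) bound $R_{ij}+JQ_jJ^\top\leq\sigma_j I$ (Lemma~\ref{RQ_bound}); (ii) use the AUBS recursion to replace $\hat\Pi_j(k-1)$ by $(\alpha_j\bar\Pi_j(k-1)^{-1}+\check H_j^\top\check R_j^{-1}\check H_j)^{-1}$; (iii) invoke the \emph{already-established convergence} of $\bar\Pi_j(k)$ (via the Riccati lemma and the observability Lemma~\ref{CILINK}) to obtain a uniform-in-$k$ bound $\bar\Pi_j(k-1)^{-1}\geq\beta_j'I$ — without this, no $k$-independent $c_j$ exists; (iv) absorb the additive $\sigma_jI$ via the scalar trick $(\sigma_jI+X)^{-1}\geq\frac{\gamma_j'}{\sigma_j+\gamma_j'}X^{-1}$ for $X\geq\gamma_j'I$; and (v) apply Lemma~\ref{l3} (the principal-submatrix inverse bound, which is where the Schur complement actually enters) to pass from $J^\top(JA(\cdot)^{-1}A^\top J^\top)^{-1}J$ to $\mu_j J^\top JA^{-T}(\cdot)A^{-1}J^\top J$, yielding $c_j=\alpha_j\gamma_j\mu_j$. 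Your proposal names none of these ingredients — in particular it misses that the proof of this lemma must lean on the convergence of the parent's AUBS, which is why the paper establishes observability and Riccati convergence \emph{before} invoking Lemma~\ref{construct_AUBS}. As written, the inductive step is an unproven claim, so the proof is incomplete.
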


We finally conclude the proof by invoking consistency of DInCIKF from Theorem~\ref{thm:consistency}.
$\blacksquare$

\section{Simulation}
In this section, we present the simulation results of distributed pose estimation for a multi-agent system with five agents. Each individual is equipped with IMU providing line acceleration and angular measurements at 100Hz, and generates relative measurements and environmental measurements at 50Hz. We test two scenarios as illustrated in Fig. \ref{f2} and Fig. \ref{f3}, while the two scenarios share the same communication topology in Fig.\ref{com}. 
\begin{figure}[H]
    \centering
    \includegraphics[width=0.35\linewidth]{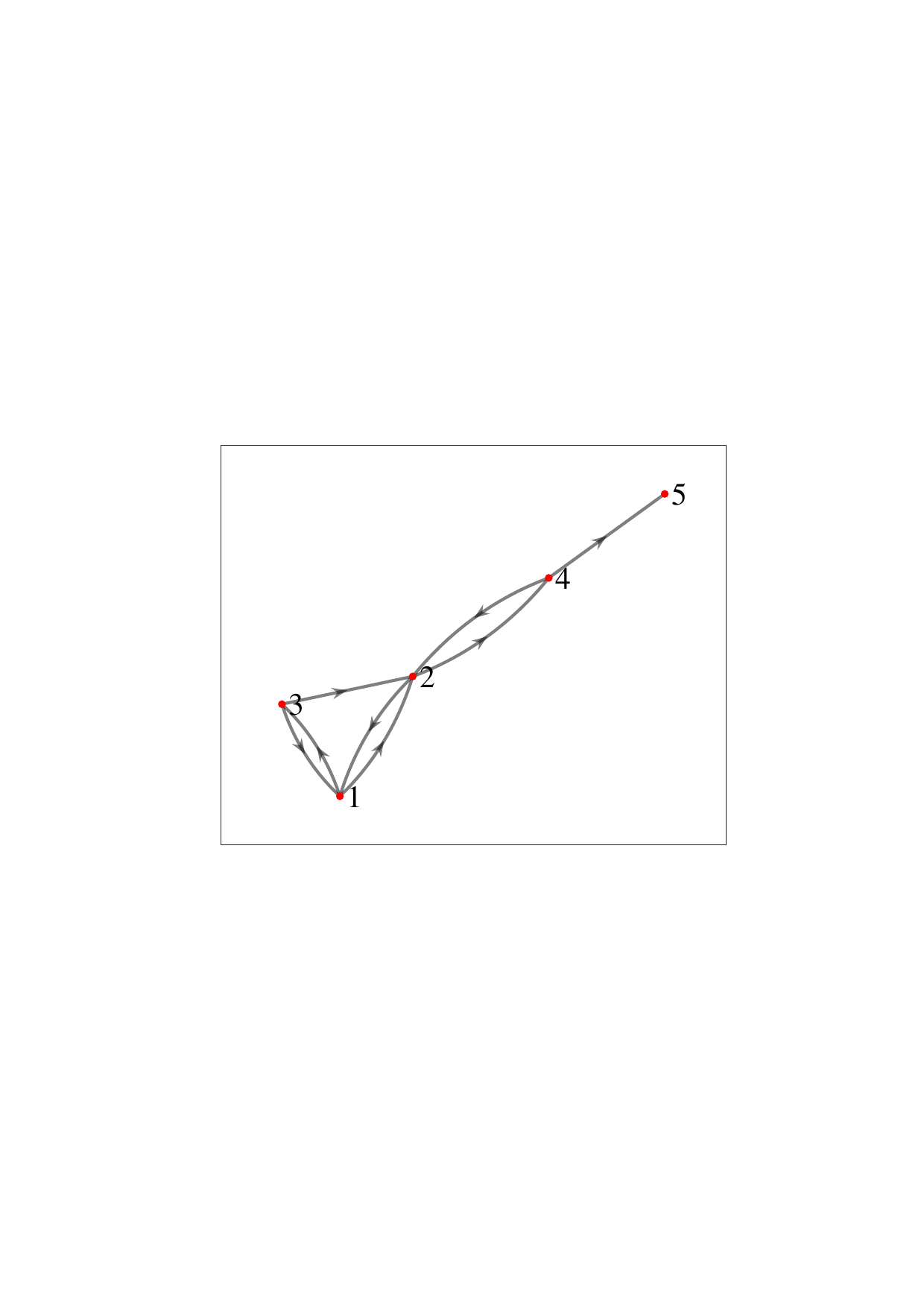}
    \caption{The communication network setting for simulation.}
    \label{com}
\end{figure}

\begin{figure}[H]
    \centering
    \includegraphics[width=0.9\linewidth]{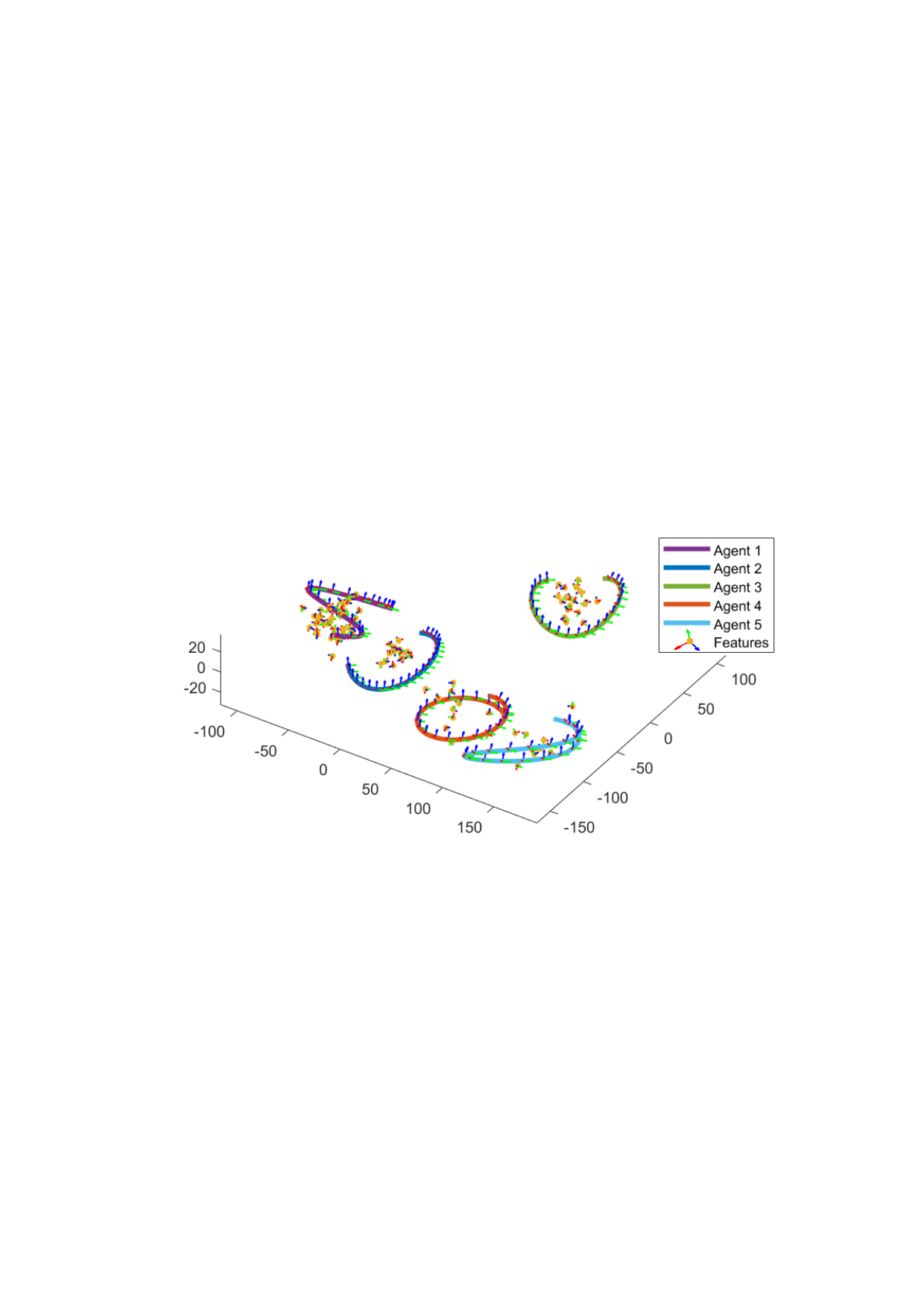}
    \caption{In Scenario 1, there are five agents with (1) different
  intensities of measurement noises, (2) a different number of features. This scenario is designed to simulate whether an agent with 
abundant information to support those with limited information in terms of 
estimation performance.}
    \label{f2}
\end{figure}
\begin{figure}[H]
    \centering
    \includegraphics[width=0.9\linewidth]{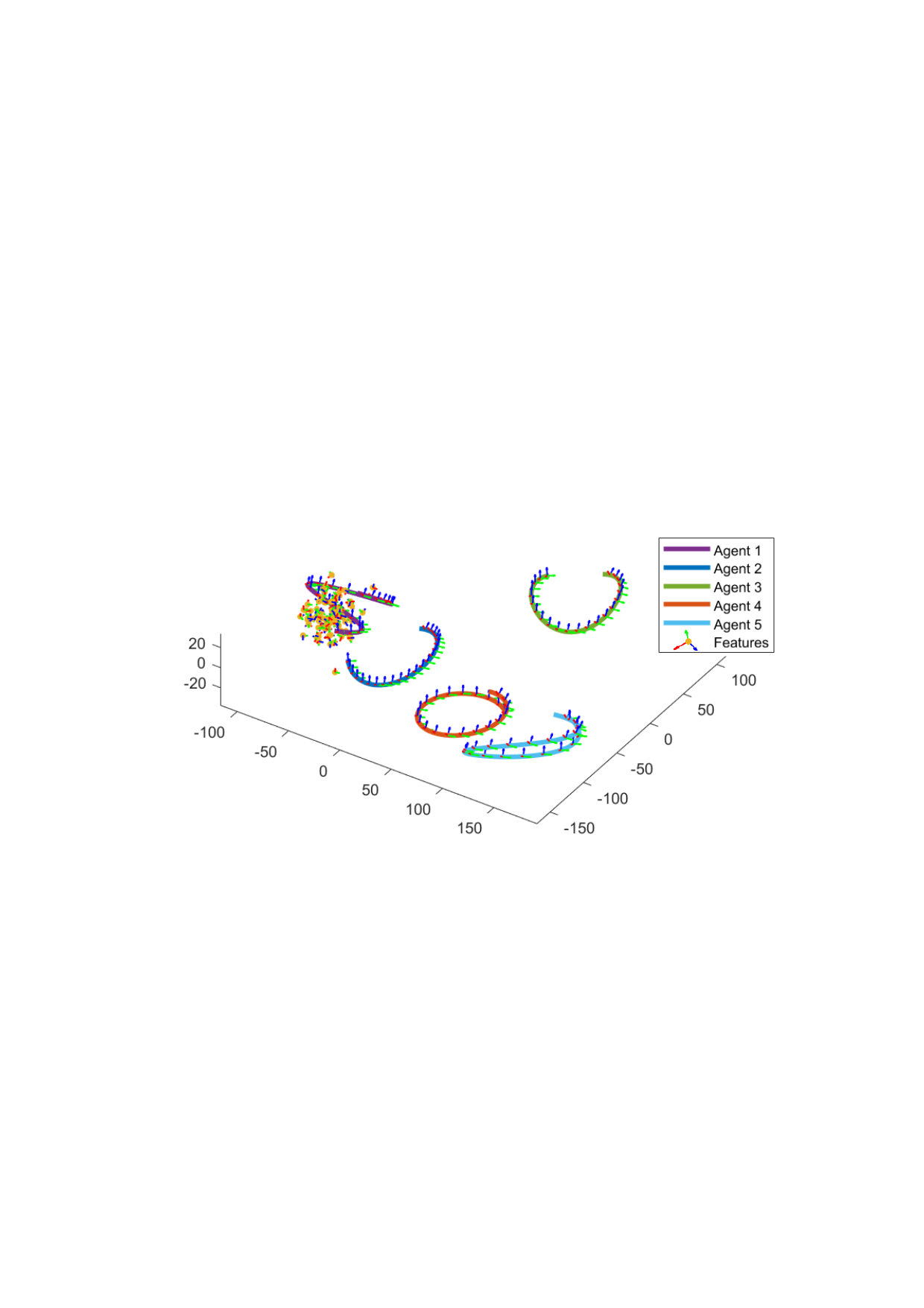}
    \caption{In Scenario 2, an extreme case is tested to showcase the effectiveness of the proposed algorithm. 
     It comprises three agents (Agent 3, 4, 5) who are completely blind, one agent (Agent 2) intermittently sensing features, and one agent (Agent 1) continuously observing features all the time. The intensities of the measurement noises for the agents are set to be the same. }
    \label{f3}
\end{figure}
We compare the proposed algorithm (DInCIKF) with standard $SO(3)$-Extended Kalman Filter~(Std-EKF) \cite{9667208} and Invariant Extended Kalman Filter~(InEKF)~\cite{li2022closed}. We use the 

The evaluation metric, following \cite{chang2021resilient}, is the average root-mean-squared-error~(RMSE) of rotation and position among five agents:
$$
\begin{aligned}
\operatorname{RMSE}_{rot}(k)&=\sqrt{\frac{ \sum_{i=1}^5\| \log(\hat R_i(k)^\top  R_i(k))   \|^2 }{5}},
\\
\operatorname{RMSE}_{pos}(k)&=\sqrt{\frac{ \sum_{i=1}^5\| \hat p_i(k)- p_i(k)  \|^2 }{5}}.
\end{aligned}
$$

The estimation performance results in scenario 1 are presented in Fig. \ref{Result1}. All methods successfully estimate the poses with the observation of features. Among these methods, DInCIKF stands out by appropriately fusing the correlated and the independent measurements. The results in Fig. \ref{Result2} demonstrate the failure of distributed pose estimation for Std-EKF in an extreme environment. Especially for agent 5, which is blind and entirely relies on neighboring agent 4 as indicated in Fig. \ref{com}, which is also blind to the features. Without reliable fusion, agent 5 will introduce estimation error to the whole system. Meanwhile, the accurate estimates provided by DInCIKF indicate its superior when solving distributed pose estimation problems.
\begin{figure}[H]
    \centering
    \includegraphics[width=0.91\linewidth]{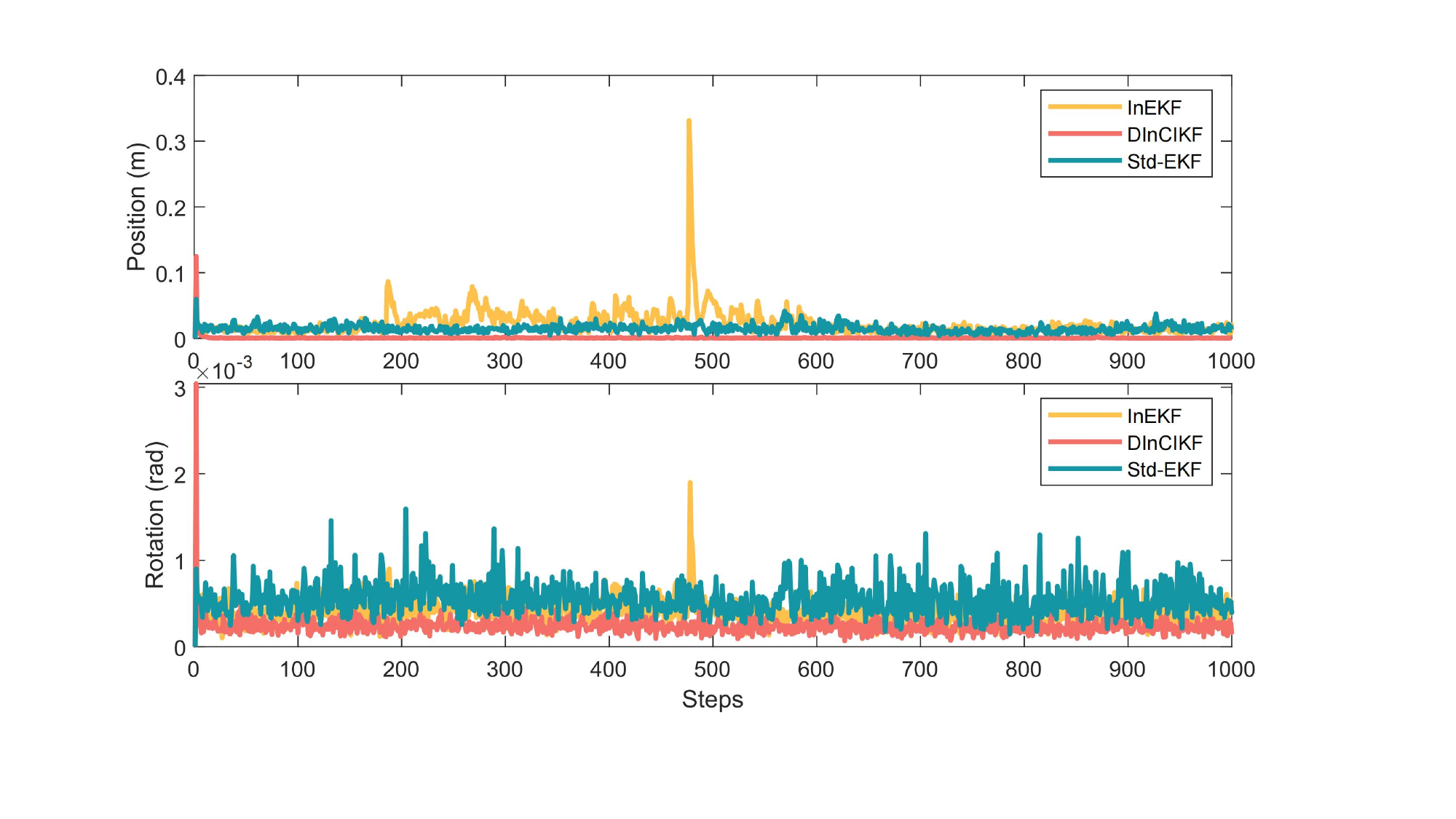}
    \caption{RMSE evolution of scenario 1.}
    \label{Result1}
\end{figure}
\begin{figure}[H]
    \centering
    \includegraphics[width=0.9\linewidth]{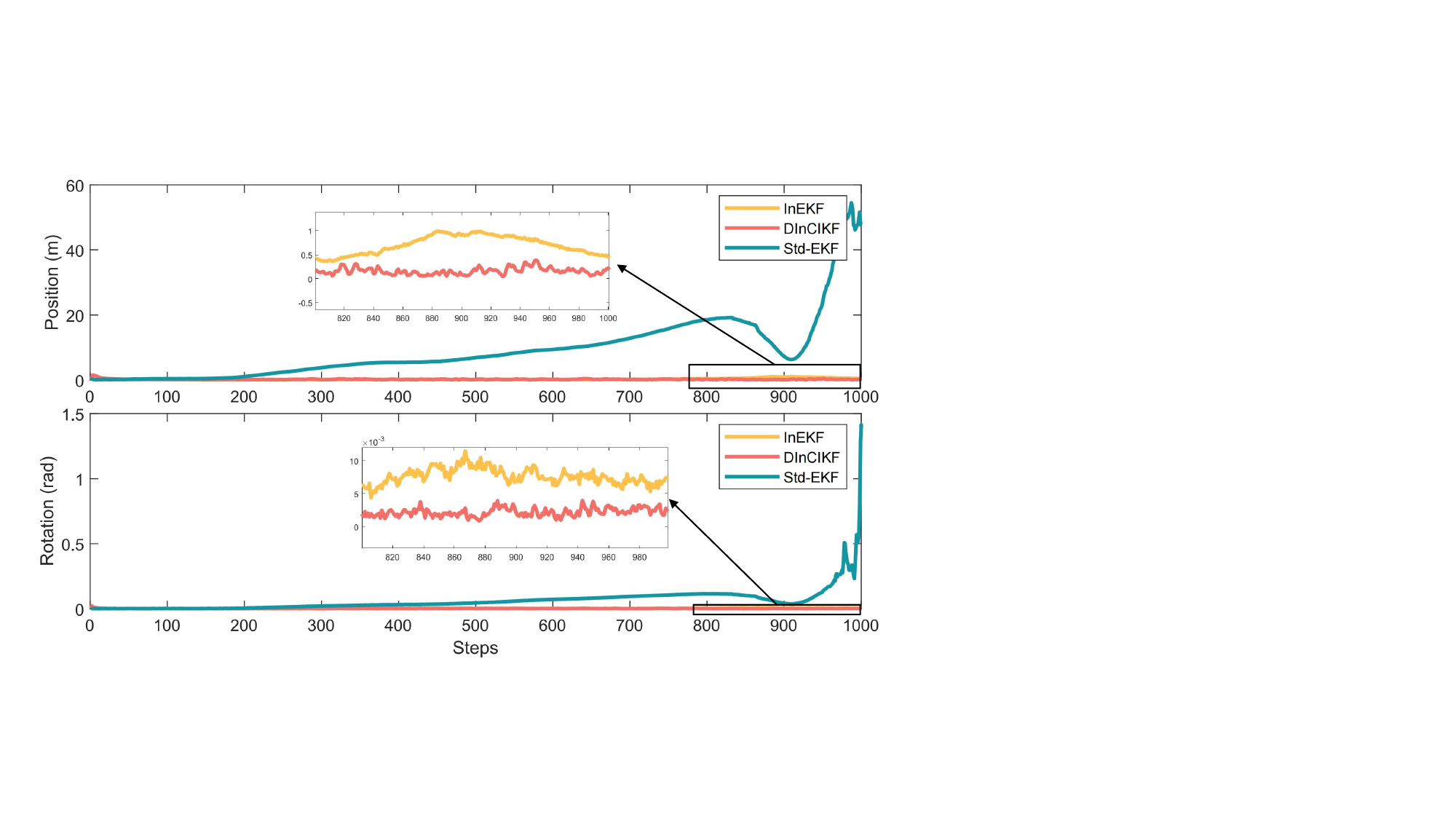}
    \caption{RMSE evolution of scenario 2.}
    \label{Result2}
\end{figure}

\section{Conclusion and Future Work}
We proposed the DInCIKF algorithm to address the distributed pose estimation problem in a mobile multi-agent network. We also analyzed the consistency and stability of DInCIKF to provide a theoretical guarantee for DInCIKF. Through simulations, we validated the algorithm's accuracy and the significance of motion representation on Lie groups. 
Moving forward, we will pursue better integration of CI and KF, which relies on closer correlation analysis for the information sources. We will also consider time-varying systems in the future, such as the scenarios involving intermittent communications.

\section{Appendix: Technical Support for Proof of Theorem~\ref{thm:stability_DInCIKF}}
\subsection{Supporting Lemmas}
\begin{lemma}~\label{l3}
For a PD matrix $M\in \mathbb{R}^{p\times p}$ and an integer $q$ satisfying $0<q \leq p $, there exists a scalar $0<\mu<1$ such that $([M]_{1:q,1:q})^{-1} \geq \mu [M^{-1}]_{1:q,1:q}  $.
\end{lemma}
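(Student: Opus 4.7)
The plan is to establish the inequality via a Schur-complement computation on a $2\times 2$ block decomposition of $M$, which reduces the claim to a scalar eigenvalue bound. Partition
$$M = \begin{bmatrix} A & B \\ B^\top & C \end{bmatrix}, \qquad A\in\mathbb{R}^{q\times q},\ C\in\mathbb{R}^{(p-q)\times(p-q)},$$
so that $A$ and $C$ are positive definite principal submatrices of the positive definite $M$. The standard block-inversion formula gives $[M^{-1}]_{1:q,1:q} = S^{-1}$, where $S \triangleq A - BC^{-1}B^\top$ is the Schur complement, itself positive definite. Since $[M]_{1:q,1:q}^{-1} = A^{-1}$, the lemma reduces to producing some $\mu\in(0,1)$ for which $A^{-1}\geq \mu\, S^{-1}$.

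Next I would turn this matrix inequality into an eigenvalue condition. Conjugating both sides by $S^{1/2}$, it becomes equivalent to $\mu I \leq S^{1/2}A^{-1}S^{1/2}$, which holds with $\mu = \lambda_{\min}(S^{1/2}A^{-1}S^{1/2})$. By similarity, $S^{1/2}A^{-1}S^{1/2}$ and $A^{-1}S = I - A^{-1}BC^{-1}B^\top$ share the same spectrum, so this minimum eigenvalue equals $1 - \lambda_{\max}(A^{-1}BC^{-1}B^\top)$. Now $A^{-1}BC^{-1}B^\top$ is similar to the positive semidefinite matrix $A^{-1/2}BC^{-1}B^\top A^{-1/2}$, hence has non-negative eigenvalues; and positive definiteness of $S$, i.e.\ $A - BC^{-1}B^\top > 0$, conjugated by $A^{-1/2}$ yields $\lambda_{\max}(A^{-1}BC^{-1}B^\top) < 1$. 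Therefore the chosen $\mu$ lies in $(0,1]$.

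The only subtlety is the boundary case $B=0$, in which $S=A$ and the natural choice returns $\mu=1$, violating the strict upper bound demanded by the lemma. This is trivially repaired: whenever $A^{-1}\geq \mu_0 S^{-1}$ holds with $\mu_0\in(0,1]$, we may replace $\mu_0$ by $\mu_0/2\in(0,1)$, and positivity of $S^{-1}$ preserves the inequality. Apart from this minor bookkeeping, every step is routine block-matrix linear algebra, so no part of the argument is particularly delicate; the sole conceptual move is recognizing that the gap between $[M]_{1:q,1:q}^{-1}$ and $[M^{-1}]_{1:q,1:q}$ is controlled by the operator norm of the off-diagonal coupling normalized by the diagonal blocks.
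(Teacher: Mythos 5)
Your proposal is correct and follows essentially the same route as the paper: partition $M$ into blocks, use the block-inversion formula to identify $[M^{-1}]_{1:q,1:q}$ with the inverse Schur complement $S^{-1}$, and then extract $\mu$ from the positive-definite comparison between $[M]_{1:q,1:q}$ and $S$. Your explicit eigenvalue computation $\mu = 1-\lambda_{\max}(A^{-1}BC^{-1}B^\top)$ and the remark about the $B=0$ boundary case merely make precise the constant whose existence the paper asserts directly.
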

\begin{proof}
    Let $M \triangleq \begin{bmatrix}
        M_{11} & M_{12}\\M_{21} & M_{22}
    \end{bmatrix}$, where $M_{11} \triangleq [M]_{1:q,1:q} \in \mathbb{R}^{q \times q}$. By the matrix inversion lemma, we have $[M^{-1}]_{1:q,1:q} = (M_{11}- M_{12}M_{22}^{-1}M_{21})^{-1}.$ Since
$M_{11} \geq M_{11}-M_{12} M_{22}^{-1}M_{21}$ and both of them are PD, we can choose $0<\mu<1$, such that $$
    M_{11}^{-1/2}(M_{11}- M_{12}M_{22}^{-1}M_{21}) M_{11}^{-1/2} \geq \mu I_{q},
    $$ which completes the proof. 
\end{proof}

\begin{lemma}~\label{RQ_bound}
There exists a scalar  $\sigma_j>0$ such that    
    \begin{equation}
        R_{ij}+J Q_d J^\top  \leq \sigma_j I,~\label{L5-1}
    \end{equation}
where $J$ is given in~\eqref{eqn:def_J}.
\end{lemma}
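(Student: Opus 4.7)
The plan is very short because the claim is essentially a finite-dimensional boundedness remark. The matrix $R_{ij} + J Q_d J^\top$ is a fixed $6\times 6$ real symmetric positive semi-definite matrix: $R_{ij}$ is the covariance of the relative measurement noise $n_{ij}$ introduced in \eqref{eq:z_ij}, and $J Q_d J^\top$ extracts the top-left $6\times 6$ block of the discrete-time process-noise covariance $Q_d$ defined after \eqref{eq:xi_prop_discrete}, so it is PSD as well. Their sum is therefore a fixed symmetric PSD matrix of finite dimension.

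First, I would invoke the elementary fact that any real symmetric matrix $M$ satisfies $M\leq \lambda_{\max}(M)\,I$ in the Loewner order; this follows at once from the spectral decomposition $M=U\Lambda U^\top$, since $\Lambda\leq \lambda_{\max}(M)\,I$ and conjugation by the orthogonal matrix $U$ preserves PSD inequalities. Second, I would apply this to $M = R_{ij} + J Q_d J^\top$ and set $\sigma_j$ to be any scalar not smaller than $\lambda_{\max}\bigl(R_{ij}+J Q_d J^\top\bigr)$. Finiteness of $\sigma_j$ comes from the fact that $R_{ij}$ and $Q_d$ are fixed, finite covariance matrices of the problem model, and strict positivity $\sigma_j>0$ is guaranteed because $R_{ij}\succ 0$ as a noise covariance.

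There is no real obstacle here; the statement is an ingredient used downstream in the stability argument for Theorem~\ref{thm:stability_DInCIKF} to control the covariance recursion \eqref{combine} uniformly in time. The only point one has to be careful about is that $J=[\,I_6~0_{6\times 3}\,]$ has full row rank, so $J Q_d J^\top$ is a genuine $6\times 6$ PSD matrix rather than a degenerate object, and no pseudo-inverse or dimension-mismatch considerations enter. Accordingly, the entire proof reduces to the one line ``take $\sigma_j=\lambda_{\max}(R_{ij}+J Q_d J^\top)$''.
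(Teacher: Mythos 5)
Your argument is correct and is exactly the ``straightforward'' proof the paper omits: the matrix $R_{ij}+JQ_dJ^\top$ is a fixed symmetric PSD matrix, so taking $\sigma_j=\lambda_{\max}(R_{ij}+JQ_dJ^\top)$ (positive since $R_{ij}\succ 0$) immediately gives the Loewner bound. Nothing further is needed.
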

\begin{proof}
The proof is straightforward and, therefore, skipped. 
\end{proof}

\begin{lemma} The system $(A,{H}_{if}(A^{-1}J^\top  J)^m)$ is observable for any $ m \in \mathbb{N}$. 
\label{CILINK}
\end{lemma}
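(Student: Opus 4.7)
The plan is to reduce the observability question to that of the much simpler pair $(A,J)$ by exploiting two structural facts: that $J^\top J=\operatorname{diag}(I_6,0_3)$, and that both $J_{if}$ and $JA^{-1}J^\top$ are invertible. First I would set $B\triangleq A^{-1}J^\top J$ and $\tilde L\triangleq JA^{-1}J^\top\in\mathbb{R}^{6\times 6}$, and note the identity
\[
JB \;=\; JA^{-1}(J^\top J) \;=\; (JA^{-1}J^\top)\,J \;=\; \tilde L\,J,
\]
which is immediate from associativity. An explicit inversion of $A$ (which is block lower triangular with $I_3$ blocks on the diagonal) shows that $\tilde L$ has the same block lower triangular structure with $I_3$ diagonal blocks and is therefore invertible.

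Iterating $JB=\tilde L J$ on $m$ would then yield $JB^m=\tilde L^m J$ for every $m\in\mathbb{N}$. Since $H_{if}=J_{if}J$ with $J_{if}=\operatorname{dexp}_{-\phi_i}^{-1}$ invertible, this gives $H_{if}B^m=(J_{if}\tilde L^m)\,J$, where the prefactor $J_{if}\tilde L^m$ is invertible. Because left multiplication of the output matrix by an invertible matrix preserves the row space, and hence the column rank of the observability matrix, I can conclude that $(A,H_{if}B^m)$ is observable if and only if $(A,J)$ is, uniformly in $m$.

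It then remains to verify observability of $(A,J)$ directly. I would stack just the first two blocks of the observability matrix:
\[
\begin{bmatrix}J\\ JA\end{bmatrix}
\;=\;
\begin{bmatrix} I_3 & 0 & 0\\ 0 & I_3 & 0\\ I_3 & 0 & 0\\ \tfrac12 g^\wedge\Delta t^2 & I_3 & I_3\Delta t\end{bmatrix}.
\]
The first two block rows already span the orientation and position coordinates; the last block row contributes $I_3\Delta t$ in the velocity column, so eliminating the first two block columns by subtraction exposes the velocity coordinates (using $\Delta t\neq 0$). Hence $O(A,J)$ attains full column rank already within its first twelve rows.

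The main obstacle is spotting the commuting identity $JB=\tilde L J$, which is what makes the observability decouple from $m$; once this is in hand, everything reduces to the invertibility of $\tilde L$ and $J_{if}$ and a short rank computation from the explicit forms of $A$ and $A^{-1}$.
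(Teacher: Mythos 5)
Your proof is correct and takes essentially the same route as the paper's: both factor out the invertible $J_{if}$ and establish full column rank using only the first two block rows $J(A^{-1}J^\top J)^m$ and $J(A^{-1}J^\top J)^mA$ --- indeed the paper's explicit matrix is exactly your $\tilde L^m J$ stacked on $\tilde L^m JA$, so your intertwining identity $JB=\tilde LJ$ is just a cleaner packaging of the paper's direct computation. One cosmetic slip: $A$ itself is not block lower triangular (it has the $I_3\Delta t$ block above the diagonal), but the relevant matrix $\tilde L=JA^{-1}J^\top$, the top-left $6\times 6$ block of $A^{-1}$, is unit block lower triangular, so your invertibility claim and the conclusion stand.
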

\begin{proof}
Given that $H_{if}=J_{if}J$ and $J_{if}$ being invertible, $$O(A, H_{if}(A^{-1}J^\top J)^m) = J_{if}O(A, J(A^{-1}J^\top J)^m),$$ and consequently $O(A,H_{if}(A^{-1}J^\top J)^m)$ has the same rank with $O(A, J(A^{-1}J^\top J)^m)$. So if $O(A, J(A^{-1}J^\top J)^m)$ has full column rank, then  $(A,{H}_{if}(A^{-1}J^\top  J)^m)$ is observable.

For each $m\geq 0$, one can calculate the matrix
$$\begin{bmatrix}
J(A^{-1}J^\top J)^m\\
J(A^{-1}J^\top J)^m A
\end{bmatrix}  =  \begin{bmatrix}
    I_3 & 0_3 & 0_3 \\
    \frac{(m)\Delta t^2}{2} g^\wedge & I_3 & 0_3\\
    I_3 & 0_3 & 0_3\\
    \frac{(m+1) \Delta t^2}{2} g^\wedge  & I_3 & \Delta t I_3,
\end{bmatrix}$$
which has full column rank, indicating that the matrix $O(A, J(A^{-1}J^\top J)^m)$ is full column rank. 
\end{proof}

\subsection{Proof of Lemma~\ref{construct_AUBS} }

First of all, consider $l \in \mathcal{V}_1$, $\check{H}_l = H_{lf}$, $\check{R}_l = H_{lf}$.
At $k=0$, $\bar{\Pi}_l(0) \geq \bar{P}_i(0)~\hat{\Pi}_l(0) \geq \hat{P}_l(0)$ by Assumption~\ref{assumption:consistecy_initial_P}.
Given the hypothesis that $\bar{\Pi}_l(k-1) \geq \bar{P}_l(k-1), \hat{\Pi}_l(k-1) \geq \hat{P}_l(k-1)$, then
\begin{align*}
\bar{\Pi}_l(k)& =A \hat{\Pi}_l(k-1)A^\top  +Q_l \notag \\
&\geq A \hat{P}_l(k-1)A^\top  +Q_l  = \bar{P}_l(k),
\end{align*}
which further yields that
\begin{align*}
&\hat{P}_l(k)^{-1}\notag\\
 =& \alpha_{l,k} \bar{P}_l(k)^{-1}+H_{lf}^\top  {R}_{lf}^{-1} H_{lf}\\
 &+\sum_{j \in \mathcal{N}_l(k)} \alpha_{j,k} J^\top  ( {R}_{lj}+  J (A\hat{P}_j(k-1)A^\top  +Q_j) J^\top        )^{-1} J\notag\\
 \geq & \alpha_{l,k} \bar{P}_l(k)^{-1}+H_{lf}^\top  {R}_{lf}^{-1} H_{lf}\notag\\
 \geq & \alpha_{l}\bar{\Pi}_l(k)^{-1}+H_{lf}^\top  {R}_{lf}^{-1} H_{lf}=\hat{\Pi}_l(k)^{-1}.\notag
\end{align*}
indicating that $\hat{\Pi}_l(k)\geq\hat{P}_l(k)$.

Next, we will show the results for  
 $i \in \mathcal{V}_{2}$. Note that we have shown for any $j\in\mathcal V_1$:
\begin{equation}~\label{eq:Pj_bound}
\bar{\Pi}_j(k)\geq \bar{P}_j(k),~\hat{\Pi}_j(k) \geq \bar{P}_j(k), ~\forall k \geq 0    
\end{equation}
Suppose that $i$ is a child node of $j$,  At $k=0$, $\hat{\Pi}_i(0) \geq \hat{P}_i(0)$ and $\bar{\Pi}_i(0) \geq \bar{P}_i(0)$ with appropriate initialization.
Given the hypothesis $\hat{\Pi}_i(k-1) \geq \hat{P}_i(k-1)$ and $\bar{\Pi}_i(k-1) \geq \bar{P}_i(k-1)$. At time $k$, by discarding local observation terms and retaining only one neighboring node  $j \in \mathcal{V}_1$ of agent $i$,
\begin{align}
\hat{P}_i(k)^{-1}=&\alpha_{i,k} \bar{P}_i(k)^{-1}+H_{if}^\top  {R}_{if}^{-1} H_{if}\notag\\
&+\sum_{s \in \mathcal{N}_i(k)} \alpha_{s,k} J^\top  ( {R}_{is}+  J \bar{P}_{s}(k) J^\top        )^{-1} J\notag\\
\geq &  \alpha_{i,k} \bar{P}_i(k)^{-1}+  \alpha_{j,k} \Omega_j(k),~\label{bound_eq1}
\end{align}
where
$$\Omega_s(k)\triangleq J^\top  ( {R}_{is}+  J (A\hat{P}_s(k-1)A^\top  +Q_s) J^\top  )^{-1} J.$$ 
Based on \eqref{eq:Pj_bound} and Lemma \ref{RQ_bound}, a lower bound of $\Omega_j(k)$ is derived as follows:
\begin{align}
&\Omega_j(k)\notag\\
\geq & J^\top  ( \sigma_j I+  JA\hat{P}_j(k-1)A^\top J^\top  )^{-1} J\notag \\
{\geq} & J^\top (\sigma_j I +JA\hat{\Pi}_j(k-1)A^\top J^\top )^{-1}J. \notag \\
=&J^\top (\sigma_j I +JA(\alpha_j\bar{\Pi}_j(k-1)^{-1} + \check{H}_j^\top \check{R}_j^{-1}\check{H}_j)^{-1})A^\top J^\top )^{-1}J. \notag
\end{align}
Since we have shown the convergence of $\bar{\Pi}_j(k)$, 
there exists a $\beta_j'>0$ such that 
   $\bar{\Pi}_j(k-1)^{-1} \geq \beta_j' I$ and hence 
\begin{equation*}
\Omega_j(k)\geq J^\top (\sigma_jI +JA(\beta_j I+\check{H}_j^\top  \check{R}_j^{-1}\check{H}_j)^{-1})A^\top J^\top )^{-1}J, 
\end{equation*}
with $\beta_j=\alpha_j \beta_j'$.
Furthermore,  we can find a scalar $\gamma_j'>0$ satisfying 
    \begin{align*}
       \gamma_j' I \leq   JA(\beta_j I + \check{H}_j ^\top \check{R}_j^{-1}\check{H}_j)^{-1})A^\top J^\top .
    \end{align*}
Combing all of these, by letting $\gamma_j \triangleq \frac{\gamma_j'}{\sigma_j+\gamma_j'}$, it follows that:
\begin{align*}
&(\sigma_jI +JA(\beta_j I+\check{H}_j ^\top \check{R}_j^{-1}\check{H}_j)^{-1})A^\top J^\top )^{-1}\notag \\
\geq & \gamma_j (JA(\beta_j I+\check{H}_j ^\top \check{R}_j^{-1}\check{H}_j)^{-1})A^\top J^\top )^{-1}.
\end{align*}
Then we have 
\begin{equation*}
\Omega_j(k) \geq \gamma_j J^\top  (JA (\beta_j I + \check{H}_j^\top  \check{R}_j^{-1}  \check{H}_j     )^{-1} A^\top   J^\top )^{-1}J.
\end{equation*}
By Lemma~\ref{l3}, there exist $0<\mu_j<1$ such that
\begin{align}
&\Omega_j(k)\notag \\
\geq &\gamma_j \mu_j J^\top J A^{-T}(\beta_j I + \check{H}_j ^\top \check{R}_j^{-1}\check{H}_j) A^{-1}J^\top J\notag\\
\geq &\gamma_j \mu_j J^\top J A^{-T}(\check{H}_j ^\top \check{R}_j^{-1}\check{H}_j) A^{-1}J^\top J. ~\label{Omega_inequality}
\end{align}
Substituting \eqref{Omega_inequality} into \eqref{bound_eq1}, it follows that 
\begin{align}~\label{combine}
\hat{P}_i(k)^{-1}
 \geq& \alpha_{i,k} \bar{P}_i(k)^{-1} + \alpha_{j,k} \Omega_j(k)\notag\\
\geq &\alpha_{i} \bar{\Pi}_i(k)^{-1} \notag\\
&\quad +  \alpha_{j,k}\gamma_j \mu_j J^\top J A^{-T}(\check{H}_j^\top  \check{R}_j^{-1}\check{H}_j) A^{-1}J^\top J\notag\\
= &\alpha_{i} \bar{\Pi}_i(k)^{-1} +  \check{H}_i^\top  \check{R}_i^{-1}\check{H}_i = \hat{\Pi}_i(k)^{-1},\notag
\end{align}
where $\check{H}_i = \check{H}_j A^{-1}J^\top J$ and $\check{R}_i = c_j \check{R}_j$ with 
\begin{equation}\label{eqn:def_c}
c_j\triangleq \alpha_{j}\gamma_j \mu_j.
\end{equation}
Therefore, $\hat{\Pi}_i(k) \geq \hat{P}_i(k)$
for $i\in\mathcal V_2$.

We repeat the above procedure to show $\hat{\Pi}_i(k) \geq \hat{P}_i(k)$
for $i\in \mathcal{V}_{3}$, and so on and so forth, which completes the proof. 

\bibliographystyle{IEEEtran}
\bibliography{CDC_ref}

\end{document}